\documentclass[10pt,reqno,a4paper]{amsart}
\usepackage[british]{babel}
\usepackage{amssymb,amstext,amsthm,eucal,bbm,mathrsfs,amscd,bbold,pifont}
\usepackage[margin=1in]{geometry}
\usepackage{array}
\usepackage[svgnames,table]{xcolor}
\usepackage[T1]{fontenc}
\usepackage[utf8]{inputenc}
\usepackage{enumitem}
\usepackage[small]{eulervm}
\usepackage{microtype}
\usepackage{verbatim}
\usepackage{booktabs,caption}
\usepackage{appendix}
\usepackage{chngcntr}
\usepackage{apptools}
\usepackage{tikz,pgfplots}
\pgfplotsset{compat=1.14}
\usetikzlibrary{calc,cd}
\usepackage[unicode,pagebackref=true]{hyperref}
\hypersetup{%
  pdftitle   = {Lifshitz symmetry: Lie algebras, spacetimes and particles},
  pdfkeywords = {Lie algebras, kinematical, spacetime, Lifshitz, homogeneous spaces},
  pdfauthor  = {José Figueroa-O'Farrill, Ross Grassie, Stefan Prohazka},
  pdfcreator = {\LaTeX\ with package \flqq hyperref\frqq},
  linkcolor=NavyBlue,
  citecolor=ForestGreen,
  urlcolor=OrangeRed,
  anchorcolor=OrangeRed,
  colorlinks=true
}
%
%
%
\renewcommand*{\backref}[1]{}
\renewcommand*{\backrefalt}[4]{%
  \ifcase #1%
  \or [Page~#2.]%
  \else [Pages~#2.]%
  \fi%
}
\theoremstyle{plain}
\newtheorem{lemma}{Lemma}

\newtheorem{theorem}[lemma]{Theorem}

\theoremstyle{definition}
\newtheorem{definition}{Definition}

\newcommand{\Ggr}{\mathcal{G}}

\newcommand{\eO}{\mathcal{O}}
\newcommand{\g}{\mathfrak{g}}
\newcommand{\h}{\mathfrak{h}}
\renewcommand{\a}{\mathfrak{a}}

\renewcommand{\d}{\partial}

\newcommand{\co}{\mathfrak{co}}

\renewcommand{\r}{\mathfrak{r}}
\newcommand{\n}{\mathfrak{n}}

\newcommand{\so}{\mathfrak{so}}
\newcommand{\iso}{\mathfrak{iso}}
\newcommand{\su}{\mathfrak{su}}
\renewcommand{\k}{\mathfrak{k}}

\newcommand{\s}{\mathfrak{s}}



\renewcommand{\t}{\boldsymbol{t}}

\newcommand{\x}{\boldsymbol{x}}

\ifdefined\C
\renewcommand{\C}{\boldsymbol{C}}
\else
\newcommand{\C}{\boldsymbol{C}}
\fi
\newcommand{\J}{\boldsymbol{J}}

\renewcommand{\P}{\boldsymbol{P}}


\newcommand{\eX}{\mathscr{X}}
\newcommand{\ad}{\operatorname{ad}}
\newcommand{\Ad}{\operatorname{Ad}}

\newcommand{\id}{\mathbb{1}}

\newcommand{\RR}{\mathbb{R}}
\newcommand{\ZZ}{\mathbb{Z}}

\newcommand{\CC}{\mathbb{C}}
\ifdefined\U
\renewcommand{\U}{\operatorname{U}}
\else
\newcommand{\U}{\operatorname{U}}
\fi
\newcommand{\SU}{\operatorname{SU}}

\newcommand{\Aut}{\operatorname{Aut}}
\newcommand{\GL}{\operatorname{GL}}

\newcommand{\CO}{\operatorname{CO}}
\newcommand{\SO}{\operatorname{SO}}
\newcommand{\Spin}{\operatorname{Spin}}
\newcommand{\ISO}{\operatorname{ISO}}

\newcommand{\tr}{\operatorname{tr}}
%
%

%
%
%
\newcommand{\EE}{\mathbb{E}}

\renewcommand{\SS}{\mathbb{S}}
\newcommand{\HH}{\mathbb{H}}
\newcommand{\GG}{\mathbb{G}}
\newcommand{\NN}{\mathbb{N}}

\newcommand{\zAdS}{\mathsf{AdS}}

\newcommand{\zS}{\mathsf{S}}
\newcommand{\zTS}{\mathsf{TS}}


\newcommand{\zL}{\mathsf{L}}
\newcommand{\zLW}{\mathsf{LW}}
\newcommand{\zE}{\mathsf{E}}

\definecolor{gris}{rgb}{0.5,0.5,0.5}

\allowdisplaybreaks[0]
\numberwithin{equation}{section}
\AtAppendix {\counterwithin{lemma}{section}}
\setcounter{tocdepth}{2}
\begin{document}

\title[Lifshitz symmetry]{Lifshitz symmetry: Lie algebras, spacetimes and particles}
\author[Figueroa-O'Farrill]{José Figueroa-O'Farrill}
\author[Grassie]{Ross Grassie}
\author[Prohazka]{Stefan Prohazka}
\address[JMF,RG,SP]{Maxwell Institute and School of Mathematics, The University
  of Edinburgh, James Clerk Maxwell Building, Peter Guthrie Tait Road,
  Edinburgh EH9 3FD, Scotland, United Kingdom}

\email[JMF]{\href{mailto:j.m.figueroa@ed.ac.uk}{j.m.figueroa@ed.ac.uk}, ORCID: \href{https://orcid.org/0000-0002-9308-9360}{0000-0002-9308-9360}}
\email[RG]{\href{mailto:rgrassie@ed.ac.uk}{rgrassie@ed.ac.uk}, ORCID: \href{https://orcid.org/0000-0002-2073-8841}{0000-0002-2073-8841}}
\email[SP]{\href{mailto:stefan.prohazka@ed.ac.uk}{stefan.prohazka@ed.ac.uk}, ORCID: \href{https://orcid.org/0000-0002-3925-39832e12}{0000-0002-3925-39832e12}}

\begin{abstract}
  We study and classify Lie algebras, homogeneous spacetimes and
  coadjoint orbits (``particles'') of Lie groups generated by spatial
  rotations, temporal and spatial translations and an additional
  scalar generator.  As a first step we classify Lie algebras of this
  type in arbitrary dimension.  Among them is the prototypical
  Lifshitz algebra, which motivates this work and the name ``Lifshitz
  Lie algebras''.  We classify homogeneous spacetimes of Lifshitz Lie
  groups.  Depending on the interpretation of the additional scalar
  generator, these spacetimes fall into three classes:
  \begin{enumerate}
  \item ($d+2$)-dimensional Lifshitz spacetimes which have one
    additional holographic direction;
  \item ($d+1$)-dimensional Lifshitz--Weyl spacetimes which can be
    seen as the boundary geometry of the spacetimes in (1) and where
    the scalar generator is interpreted as an anisotropic dilation;
  \item and $(d+1)$-dimensional aristotelian spacetimes with one
    scalar charge, including exotic fracton-like symmetries that
    generalise multipole algebras.
  \end{enumerate}
  We also classify the possible central extensions of Lifshitz Lie
  algebras and we discuss the homogeneous symplectic manifolds of
  Lifshitz Lie groups in terms of coadjoint orbits.
\end{abstract}

\thanks{EMPG-22-09}

\maketitle
\tableofcontents

\section{Introduction}
\label{sec:introduction}

Since the 1980s, numerous examples of condensed matter systems without
a quasiparticle description have been found~\cite{Hartnoll:2016apf,
  Iqbal:2011ae}. These systems cannot be described using traditional
Landau-Fermi liquid theory; therefore, these non-Fermi liquids, such
as cuprate superconductors~\cite{Arav:2014goa, Taylor:2015glc,
  Mohammadi:2020jbe}, heavy fermion systems near a quantum phase
transition~\cite{Arav:2019tqm, Charmousis:2010zz, Brynjolfsson:2010rx}
and graphene in metallic states~\cite{PhysRevLett.124.076801,
  2018Natur.556...80C,cao2018unconventional} require a new set of
tools for understanding their thermodynamic and transport
properties~\cite{Sachdev:2010ch, Inkof:2019gmh}. In particular, there
is great interest in understanding the universal behaviour of certain
physical properties near quantum critical points~\cite{Inkof:2019gmh,
  MohammadiMozaffar:2012dsy,Erdmenger:2018svl}. For example,
in~\cite{Jeong:2017rxg, PhysRevLett.85.626, Bruin804}, it was shown
that resistivity grows linearly with temperature in heavy fermion
materials. For systems with Lifshitz invariance, Lorentz invariance is
broken, and we observe an anisotropic scaling between time and space,
\begin{equation}
  \label{eq:anisotr}
  t \rightarrow \lambda^z t \quad \text{and} \quad \x \rightarrow \lambda \x,
\end{equation}
where $z \neq 1$. However, away from these critical points, there is
the option that Lorentz invariance is restored, corresponding to $z=1$
above.

Over the last two decades, holography has developed into a novel tool
for investigating the physical properties of condensed matter systems
without a quasiparticle description. A wealth of literature has
appeared under the title AdS/CMT, which seeks to build a holographic
dictionary between the strongly-coupled quantum field theories
underlying these systems and gravitational theories
(see, e.g.,~\cite{Hartnoll:2016apf, Baggio:2011ha, Hartnoll:2009sz,
  Sachdev:2011wg, zaanen_liu_sun_schalm_2015}). This work has led to
exciting connections between black hole physics and the thermal
physics of strongly interacting condensed
matter~\cite{zaanen_liu_sun_schalm_2015}, new ways (see, e.g.,
\cite{Gong:2020pse}) of calculating entanglement entropy in Lifshitz
field theories~\cite{MohammadiMozaffar:2017nri,He:2017wla} and
searching for universal properties of non-relativistic field
theories~\cite{Baggio:2011ha}; it has also led to fields of research
such as holographic superconductors~\cite{Horowitz:2010gk} and
holographic hydrodynamics~\cite{ammon_erdmenger_2015}.

Tailoring this AdS/CMT toolkit to the strongly-coupled systems near
criticality mentioned previously, we arrive at a field of research
known as Lifshitz holography, so-called due to the anisotropic
Lifshitz scaling present in the field theory at the boundary. The
gravitational theories dual to these Lifshitz field theories are built
upon Lifshitz spacetimes, where the scaling symmetry of the boundary
field theory is geometrised as the presence of the radial
dimension~\cite{Kachru:2008yh, Baggio:2011ha}. This transformation of
the scaling symmetry into a radial direction has a nice
interpretation, which was highlighted in~\cite{Hartong:2014pma}.
Namely the Lie algebra of Killing vector fields (KVFs) of the bulk
geometry, which is isomorphic to the Lifshitz algebra, becomes the Lie
algebra of conformal Killing vector fields (CKVFs) of the boundary
geometry. We can view this transformation from KVFs to CKVFs as an
artefact of the boundary geometry being the quotient of the bulk
geometry with respect to the Killing vector field corresponding to the
radial direction.

To see this transformation, consider the $(d+2)$-dimensional Lifshitz
metric~\cite{Kachru:2008yh} (with unit radius of curvature), with
exponent $z$, given in local coordinates
$(r,t,\x) \in (0,\infty) \times \RR \times \RR^d$ by
\begin{equation}
  \label{eq:lifshitz}
  g = - \frac{dt^2}{r^{2z}} + \frac{dr^2}{r^2} +  \frac{d\x^2}{r^2}.
\end{equation}
Here, $d\x^2$ is the euclidean metric on $\RR^d$. For generic values
of the exponent $z$, the Lie algebra of Killing vector fields of the
above metric has dimension $d(d+1)/2 + 2$ and is spanned by
\begin{equation}
  \label{eq:lifshitz-kvfs}
  \xi_{J_{ab}} = -x_a \d_b + x_b \d_a, \qquad \xi_{P_a} = \d_a, \qquad
  \xi_H = \d_t\qquad\text{and}\qquad \xi_D = r \d_r + x^a \d_a + z t \d_t,
\end{equation}
which satisfy the (opposite) Lie brackets to those of the
\emph{Lifshitz Lie algebra} spanned by $J_{ab}= -J_{ba}$, $P_a$, $H$
and $D$:
\begin{equation}\label{eq:lifshitz-lie-algebra}
  \begin{split}
    [J_{ab},J_{cd}] &= \delta_{bc} J_{ad} - \delta_{ac} J_{bd} - \delta_{bd} J_{ac} + \delta_{ad} J_{bc}\\
    [J_{ab}, P_c] &= \delta_{bc} P_a - \delta_{ac} P_b\\
    [D,P_a] &= P_a\\
    [D,H] &= z H,
  \end{split}
\end{equation}
with no other nonzero Lie brackets. It is clear that the Killing
vector fields above span every tangent space, so that the Lie algebra
is transitive and hence the Lifshitz metric is (locally) homogeneous.
In fact, it is not hard to show that the spacetime is homogeneous.
There are two special values of the exponent for which the metric
admits additional Killing vector fields. As mentioned previously, if
$z=1$, then the Lorentz boost invariance of the metric is restored and
\eqref{eq:lifshitz} describes the Poincaré patch of $\zAdS_{d+2}$. In
this case, the space is maximally symmetric, admitting a Lie algebra
of isometries of dimension $(d+2)(d+3)/2$. The other special value is
$z=0$. With this value for the exponent, equation~\eqref{eq:lifshitz}
describes a conformally flat lorentzian metric on the product of a
timelike line with $(d+1)$-dimensional hyperbolic space.

Notice that the Killing vector $\xi_D$ is the only vector field with a
$\d_r$ component. The boundary at $r=0$ is diffeomorphic to the
hypersurface $r = \epsilon$ for some small $\epsilon > 0$, and the
integral curves of $\xi_D$ hit the hypersurface $r=\epsilon$ at
precisely one point. In other words, the hypersurface can be
identified with the space of orbits of $\langle \xi_D \rangle$; that
is, the hypersurface can be identified with the quotient of the
homogeneous Lifshitz spacetime by $\langle \xi_D \rangle$. This
quotient space is what we will call a $(d+1)$-dimensional
Lifshitz--Weyl spacetime in this paper.

This story is analogous, albeit not precisely, to conformal
geometry. One can think of an orientable conformal $n$-dimensional
manifold as one whose frame bundle admits a reduction to the
similitude group $\CO(n) \cong \SO(n) \times \RR^+$.  Here, we
describe an anisotropic version of such a structure: namely, a
($d+1$)-dimensional manifold with a $\CO(d)$-structure. This
modification allows us to break the tangent bundle of such a manifold
as a direct sum of a rank-$d$ vector bundle and a line bundle, and the
dilatation subgroup of $\CO(d)$ assigns a priori different weights to
these bundles.

In this work we provide the first systematic study of Lifshitz Lie
algebras, spacetimes and particles, as well as a first step in the
classification of exotic aristotelian symmetries. We will now provide
a summary of these results.

\section{Summary and overview of the results}
\label{sec:summary-results}

This work can roughly be separated into three parts:
\begin{enumerate}
\item a Lie algebraic part (Section~\ref{sec:lifs-lie-algs}) devoted to
  the classification of Lifshitz Lie algebras;
\item a geometric part (Sections~\ref{sec:homog-lifs-spac}
  and \ref{sec:geom-prop-spac}) devoted to the classification of
  homogeneous spacetimes of Lifshitz Lie groups;
\item and a symplectic part (Section~\ref{sec:coadj-orbits}) which is
  devoted to the partial classification of homogeneous symplectic
  manifolds of Lifshitz Lie groups.
\end{enumerate}
We will now provide a largely self-contained overview and summary of
the main results of this paper and refer to the following sections for
the proofs and further details.

\subsection{Lie algebras}
\label{sec:algebraic-part}

We start with an algebraic classification of \textbf{Lifshitz Lie
  algebras}, which by definition are Lie algebras which contain a
rotational $\so(d)$ subalgebra spanned by $\J$ under which the
remaining generators transform as one vector $\P$ and two scalars $H$
and $D$.  This definition indeed leads to the prototypical Lifshitz
symmetries but, as we will see, also to other interesting algebras and
spacetimes which do not necessarily share the same interpretation. The
definition fixes the Lie brackets
\begin{equation}
  \label{eq:lifshitz-common-intro}
  \begin{split}
    [J_{ab},J_{cd}] &= \delta_{bc} J_{ad} - \delta_{ac} J_{bd} -  \delta_{bd} J_{ac} + \delta_{ad} J_{bc}\\
    [J_{ab}, P_c] &= \delta_{bc} P_a - \delta_{ac} P_b\\
    [J_{ab}, H] &=0\\
    [J_{ab}, D] &=0 \, .
  \end{split}
\end{equation}
The remaining undetermined Lie brackets are only subject to the Jacobi
identity. Up to isomorphism there are seven classes of Lifshitz Lie
algebras, which are listed in Table~\ref{tab:lifshitz} for all
$d \geq 2$. Two of the classes consist of two inequivalent Lie
algebras distinguished by a sign and one of the classes is a
one-parameter ($z$) family, as in equation~\eqref{eq:anisotr}. This
Lie algebra, denoted $\a_{3}^{z}$ here, is the prototypical Lifshitz
algebra and it motivates our generalisations. The details of the
classification are given in
Section~\ref{sec:alas-d>3},~\ref{sec:alas-d=3},~\ref{sec:alas-d=2} and
\ref{sec:summary}. Their interrelations are visualised in
Figure~\ref{fig:limits-aristo-lifshitz} and interpreted geometrically
in Section~\ref{sec:interpr-limits}.

\begin{table}[h!]
\setlength{\tabcolsep}{10pt}
  \centering
  \caption{Lifshitz Lie algebras in $d\geq 2$}
  \label{tab:lifshitz}
  \setlength{\extrarowheight}{2pt}
  \rowcolors{2}{blue!10}{white}
   \resizebox{\linewidth}{!}{
  \begin{tabular}{*{2}{>{$}l<{$}|}*{3}{>{$}l<{$}}|l}
    \toprule
    \multicolumn{1}{c|}{Label} & \multicolumn{1}{c|}{$d$} & \multicolumn{3}{c|}{Nonzero Lie brackets in addition to $[\J,\J] = \J $ and $[\J,\P] = \P$} & \multicolumn{1}{c}{Comments}\\\midrule
    \a_1 & \geq 2 & & & & $\mathfrak{iso}(d) \oplus \RR^2$ \\
    \a_2 & \geq 2 & [D,H] = H & & & \\
    \a_3^z & \geq 2 & [D,H] = z H & [D,P_a]= P_a & & $z \in \RR$\\
    \a_4^\pm & \geq 2 & & & [P_a,P_b]  = \pm J_{ab} &  $\so(d,1) \oplus \RR^2$, $\so(d+1) \oplus \RR^2$ \\
    \a_5^\pm & \geq 2 & [D,H] = H & & [P_a,P_b]=\pm J_{ab} &\\
    \a_6 & 2 & & & [P_a,P_b] = \epsilon_{ab} H & \\
    \a_7 & 2 & [D,H]=2H & [D,P_a]=P_a & [P_a,P_b] = \epsilon_{ab} H & \\
    \bottomrule
  \end{tabular}
}
\end{table}

One might argue that it would be natural to restrict our attention to
Lifshitz Lie algebras where none of the scalar generators, $D$ or $H$,
is central. Under such a restriction we would ignore, as the reader is
free to do, the Lie algebras $\a_{1}$, $\a_{3}^{z=0}$, $\a_{4}^{\pm}$
and $\a_{6}$. We nevertheless keep them for completeness and
uniformity, but we will see that they are mostly (rather trivial)
generalisations of the aristotelian algebras and spacetimes without a
particularly close connection to the prototypical Lifshitz algebra and
spacetime.

A natural algebraic question, especially with our later applications
in mind, is if the Lie algebras admit nontrivial central extensions.
Some do, as we show in Section~\ref{sec:central-extensions} and as
summarised in Table~\ref{tab:lifshitzcent}.

\subsection{Homogeneous spacetimes}
\label{sec:geometric-part}

In Section~\ref{sec:homog-lifs-spac} we classify homogeneous
spacetimes of the Lifshitz Lie algebras. These are smooth
manifolds (with temporal and spatial directions) on which the Lifshitz
Lie groups act transitively.  Intuitively, the spacetimes look
the same at every point, i.e., there are no preferred points.
Well-known homogeneous spacetimes that therefore also have these
properties are the lorentzian Minkowski and (anti)~de~Sitter
spacetimes.

Homogeneous spaces of a Lie group $G$ are described infinitesimally by
Klein pairs $(\g,\h)$ consisting the Lie algebra $\g$ of $G$ and a Lie
subalgebra $\h$ integrating to a closed subgroup of $G$.  Therefore a
practical way to classify homogeneous spaces of $G$ is to classify the
Klein pairs $(\g,\h)$.   In practice this means searching for Lie
admissible subalgebras $\h$ of the Lie algebras in
Table~\ref{tab:lifshitz}.  It is only once we realise the Lie algebra
as vector fields in a homogeneous space that we may assign a
geometric interpretation to the generators in $\g$, be it as
rotations, translations in both space and time or (generalised)
dilatations.  We now discuss the three kinds of homogeneous spacetimes
of Lifshitz Lie groups that we discuss in this paper.

\subsubsection{$(d+2)$-dimensional Lifshitz spacetimes}

Here we quotient by the rotational part of the algebra, i.e., $\h$ is
spanned by $\J$. This leads to $(d+2)$-dimensional spacetimes of
which, loosely speaking and following the discussion in the
Introduction, one direction could be interpreted as being holographic.
These homogeneous Lifshitz spacetimes are summarised in
Table~\ref{tab:d+2-lifshitz-spaces-sum} and there is one for each of
the Lie algebras of Table~\ref{tab:lifshitz}.  The details can be
found in Section~\ref{sec:d+2-lifshitz-spaces}.

\begin{table}[h!]
  \setlength{\tabcolsep}{10pt}
  \centering
  \caption{($d+2$)-dimensional homogeneous Lifshitz spacetimes for $d\geq 2$}
  \label{tab:d+2-lifshitz-spaces-sum}
  \setlength{\extrarowheight}{2pt}
  \rowcolors{2}{blue!10}{white}
  \begin{tabular}{*{3}{>{$}l<{$}|}c}
    \toprule
    \multicolumn{1}{c|}{$\zL$\#}  & \multicolumn{1}{c|}{$\a$} & \multicolumn{1}{c|}{$d$} & \multicolumn{1}{c}{Homogeneous space}\\\midrule
    1 & \a_1 & \geq 2 & $\zS \times \RR \cong \EE^d \times \RR^2$ \\
    2 & \a_2 & \geq 2 & $\EE^d \times \GG$ \\
    3_{z=0} & \a_3^{z=0} & \geq 2 & $\zTS \times \RR$\\
    3_{z\neq0} & \a_3^{z\neq0} & \geq 2 & Lifshitz spacetime\\
    4_\pm & \a_4^\pm & \geq 2 & $\HH^d \times \RR^2$, $\SS^d \times \RR^2$ \\
    5_\pm & \a_5^\pm & \geq 2 & $\HH^d \times \GG$, $\SS^d \times \GG$ \\
    6 & \a_6 & 2 & $\NN \times \RR$\\
    7 & \a_7 & 2 & $\NN$ fibration over $\RR$\\
    \bottomrule
  \end{tabular}
  \vspace{10pt}
  \caption*{$\zS$ is the static aristotelian spacetime, $\EE^d$ is the
    $d$-dimensional euclidean space, $\zTS$ the torsional static
    aristotelian spacetime, $\HH^d$ is $d$-dimensional hyperbolic
    space, $\SS^d$ is the $d$-dimensional sphere, $\GG$ is the
    simply-connected two-dimensional Lie group whose Lie algebra is
    $[D,H] = H$ and $\NN$ is the simply-connected three-dimensional
    Heisenberg group. See Table~\ref{tab:aristotelian-spacetimes} for
    more details on the aristotelian geometries.}
\end{table}
They all share the same invariants of low rank: namely, $H$ and $D$, their
dual one-forms $\eta,\delta$, as well as the degenerate metric $\pi^2$
and degenerate co-metric $P^2$. As we discuss in
Section~\ref{sec:invar-lifsh-spac} one can use these invariants to
construct nondegenerate metrics, in particular we explicitly recover
the prototypical Lifshitz metric~\eqref{eq:lifshitz}.

There is another class of homogeneous spaces that we can interpret as
spacetimes of the Lifshitz algebras.  When we quotient by one
additional scalar generator we obtain $(d+1)$-dimensional spacetimes
and they are the subject of Section~\ref{sec:d+1-lifshitz-spaces}.
Here we must discriminate between two cases depending on whether or not the
Lifshitz Lie group acts effectively.\footnote{An action is effective
  if every non-identity element of the group moves some point.} When
the action is effective, we interpret all symmetries as spacetime
symmetries and we are led to what we term Lifshitz--Weyl spacetimes.

On the other hand, if the action is not effective, then
one of the scalars does not act on the spacetime.  These spacetimes
are aristotelian and we may interpret the additional generator as a
scalar charge, such as, for example, an electric charge.  If the Lie
algebra does not split as a direct sum of an aristotelian algebra and
the one-dimensional subalgebra spanned by the scalar charge, we call
it an exotic spacetime symmetry.

\subsubsection{$(d+1)$-dimensional Lifshitz--Weyl spacetimes}

For Lifshitz--Weyl spacetimes we quotient by one additional generator
($D$ in our notation), to obtain a $(d+1)$-dimensional homogeneous
space. The name of these spacetimes originates from the interpretation
of $D$ as a (generalised) dilatation and we have summarised them in
Table~\ref{tab:d+1-lifshitz-spaces-eff}.

\begin{table}[h!]
  \setlength{\tabcolsep}{10pt} \centering
  \caption{Effective ($d+1$)-dimensional homogeneous Lifshitz--Weyl spacetimes in $d\geq 2$}
  \label{tab:d+1-lifshitz-spaces-eff}
  \setlength{\extrarowheight}{2pt}
  \rowcolors{2}{blue!10}{white}
  \resizebox{\linewidth}{!}{
    \begin{tabular}{*{3}{>{$}l<{$}|}*{3}{>{$}l<{$}}|l}
      \toprule
      \multicolumn{1}{c|}{$\zLW$\#} & \multicolumn{1}{c|}{$\a$} & \multicolumn{1}{c|}{$d$} & \multicolumn{3}{c|}{Nonzero Lie brackets in addition to $[\J,\J] = \J $ and $[\J,\P] = \P$} & \multicolumn{1}{c}{Geometry}\\\midrule
      1 & \a_2 & \geq 2 & [D,H] = H & & & $z=\infty$, flat \\
      2_z & \a_3^z & \geq 2 & [D,H] = z H & [D,\P] = \P & & $z$, flat\\
      3_\pm & \a_5^\pm & \geq 2 &  [D,H] = H & & [\P,\P] = \pm \J & $z=\infty$, not flat\\
      4 & \a_7 & 2 &  [D,H] = 2 H & [D,\P] = \P & [\P,\P] = H & $z=2$, not flat\\
      \bottomrule
    \end{tabular}
  }
  \caption*{The bases are such that the stabiliser subalgebra $\h$ is
    spanned by $\J$ and $D$. These spacetimes admit a canonical
    torsion-free invariant connection which is either flat or not (as
    denoted in the final column).}
\end{table}

These homogeneous spaces can be roughly interpreted as ``going to the
holographic boundary'' with respect to the $(d+2)$ Lifshitz spacetimes
based on the same Lie algebra. As described in
Section~\ref{sec:invar-lifsh-weyl} the invariants of Lifshitz--Weyl
spacetimes are rotational invariant tensors on $M$ which transform
according to some weight. We have summarised these conformal weights
in Table~\ref{tab:d+1-invariants}.

\subsubsection{$(d+1)$-dimensional  aristotelian spacetimes with scalar charge}

For the $(d+1)$-dimensional aristotelian spacetimes we again quotient
by rotations and one scalar charge. In this case this scalar acts
trivially on the spacetime and is therefore not interpretable as a
dilatation or any other spacetime symmetry, but as a scalar
charge $Q$.  Consequently, the underlying geometry is not Lifshitz but
aristotelian.  Aristotelian geometries permit rotational,
spatio-temporal translational symmetries, but no boosts nor dilations
(see~\cite{Figueroa-OFarrill:2018ilb} for a classification
of aristotelian Lie algebras and spacetimes). We have summarised all
of them in Table~\ref{tab:d+1-exot} where one can see that
that they fall into two classes.

\begin{table}[h!]
  \setlength{\tabcolsep}{10pt}
  \centering
  \caption{($d+1$)-dimensional aristotelian spacetimes with one scalar charge in $d\geq 2$}
  \label{tab:d+1-exot}
  \setlength{\extrarowheight}{2pt}
  \rowcolors{2}{blue!10}{white}
  \resizebox{\linewidth}{!}{
    \begin{tabular}{*{3}{>{$}l<{$}|} *{3}{>{$}l<{$}}|l }
      \toprule
      \multicolumn{1}{c|}{$\zE$\#} & \multicolumn{1}{c|}{$\a$} & \multicolumn{1}{c|}{$d$} & \multicolumn{3}{c|}{Nonzero Lie brackets in addition to $[\J,\J] = \J $ and $[\J,\P] = \P$} & \multicolumn{1}{c}{Geometry}                                            \\\midrule
                                   & \a_1                      & \geq 2                   &                                                                                             &             &                  & $\zS$                                  \\
      1                            & \a_2                      & \geq 2                   & [H,Q]= Q                                                                                    &             &                  & $\zS$                                  \\
                                   & \a_3^{z = 0}              & \geq 2                   &                                                                                             & [H,\P] = \P &                  & $\zTS$                                 \\
      2_{z\neq 0}                  & \a_3^{z\neq0}             & \geq 2                   & [H,Q] = zQ                                                                                  & [H,\P] = \P &                  & $\zTS$                                 \\
                                   & \a_4^{\pm}                & \geq 2                   &                                                                                             &             & [\P,\P] = \pm \J & $\HH^d \times \RR$, $\SS^d \times \RR$ \\
      3_{\pm}                      & \a_5^\pm                  & \geq 2                   & [H,Q] = Q                                                                                   &             & [\P,\P] = \pm \J & $\HH^d \times \RR$, $\SS^d \times \RR$ \\
                                   & \a_6                      & 2                        &                                                                                             &             & [\P,\P] = H      & $\NN$                                  \\
      4                            & \a_6                      & 2                        &                                                                                             &             & [\P,\P] = Q      & $\zS$                                  \\
      5                            & \a_7                      & 2                        & [H,Q] = 2Q                                                                                  & [H,\P]= \P  & [\P,\P] = Q      & $\zTS$                                 \\
      \bottomrule
    \end{tabular}
  }
  \caption*{The bases are such that the stabiliser subalgebra $\h$ is
    spanned by $\J$ and $Q$. In all cases the underlying spacetime
    geometry is aristotelian. As denoted in the first column some of
    them are exotic symmetries ($\zE_1$ to $\zE_5$) where the charge
    is influenced by the aristotelian symmetries. The remaining cases
    are such that the Lie algebra is a direct sum of an aristotelian
    algebra and the charge.

    $\zS$ is the static aristotelian spacetime, $\zTS$ the torsional
    static aristotelian spacetime, $\HH^d$ is $d$-dimensional
    hyperbolic space, $\SS^d$ is the $d$-dimensional sphere and $\NN$
    is the simply-connected three-dimensional Heisenberg group. See
    Table~\ref{tab:aristotelian-spacetimes} for more details on the
    aristotelian geometries.}
\end{table}

The first class consists of Lifshitz Lie algebras which are isomorphic
to a direct sum $\langle \J, H, \P \rangle \oplus \RR Q$ of an
aristotelian Lie algebra and the one-dimensional Lie algebra spanned
by the scalar charge $Q$.  The scalar charge decouples from the
spacetime algebra and we think of such symmetries as trivial.

The second class is more interesting.  It consists of Lifshitz Lie
algebra where $Q$ is acted on nontrivially by the spacetime
symmetries. Hence although $Q$ still acts trivially on the spacetime,
its nontrivial interaction with the spacetime symmetries promotes it
to an ``exotic spacetime symmetry'' in an underlying aristotelian
geometry.  We call these spacetime-dependent charges and we say they
are exotic because the ``usual'' (non-exotic) internal charges
decouple from the spacetime symmetries.  In some cases, e.g., $\zE_4$, the
scalar charge is a central extension, but not in all, e.g., $\zE_1$.
It is interesting to note that the only consistent way to get exotic
symmetries in generic dimension is via a charge in the $[H,Q]$
commutator.  We provide further context concerning exotic spacetime
symmetries and their connection to multipole algebras and fractons in
the concluding Section~\ref{sec:conclusion}.

\subsection{Lifshitz particles}
\label{sec:symplectic-part}

Finally, in Section~\ref{sec:coadj-orbits}, we analyse (classical)
\textbf{Lifshitz particles}, equivalently the \textbf{elementary
  systems} with Lifshitz symmetry.  These are homogeneous symplectic
manifolds of a Lifshitz Lie group.  They can be thought as the
mechanistic description of elementary particles.  The geometric
quantisation of these symplectic manifolds give rise (as is well known
for Poincaré and Galilei groups) to unitary irreducible
representations of the group.  As described in more detail in
Section~\ref{sec:coadjoint-orbits} a simply-connected homogeneous symplectic
manifold of a Lie group $G$ is the universal cover of a coadjoint
orbit of $G$ or possibly of a one-dimensional central extension.  One
is therefore led to study the central extensions of the Lie algebras,
see Section~\ref{sec:central-extensions}, and the structure of the
coadjoint orbits.  It is important to emphasise that coadjoint
orbits are an intrinsic property of the Lie group and only once a
choice of homogeneous spacetime is made can they be interpreted as the
space of motions of classical particles in a spacetime.

We construct the coadjoint actions for all cases and highlight the
structure of the coadjoint orbits. We put special emphasis on the
prototypical case $\a_{3}^{z\neq 0}$ and provide the necessary
information to perform the classification, if so desired.

\section{Lifshitz Lie algebras}
\label{sec:lifs-lie-algs}

In this section we classify Lifshitz Lie algebras, but first a
definition.\footnote{In \cite{Figueroa-OFarrill:2018ygf}, one of us
  introduced the notion of a ``generalised Lifshitz algebra'' to be a
  graded kinematical Lie algebra together with the grading element.
  The Lifshitz Lie algebras in the present paper will be seen to be
  graded aristotelian Lie algebras together with the grading element.
  This means that the ``generalised Lifshitz algebras'' in
  \cite{Figueroa-OFarrill:2018ygf} are a special class of
  boost-extended Lifshitz Lie algebras, which will be the subject of a
  follow-up paper.}

\begin{definition}
  A \emph{Lifshitz Lie algebra} is a ($d(d+1)/2 +
  2$)-dimensional real Lie algebra $\g$ with basis $J_{ab}=-J_{ba},
  P_a, H, D$, for $a,b =1,\dots,d$, such that $J_{ab}$ span a Lie
  subalgebra $\r \cong \so(d)$, under which $P_a$ transforms as a
  vector and $H$ and $D$ transform as scalars.
\end{definition}

Unpacking this definition, we see that all such Lie algebras share the
following Lie brackets:
\begin{equation}
  \label{eq:lifshitz-common}
  \begin{split}
    [J_{ab},J_{cd}] &= \delta_{bc} J_{ad} - \delta_{ac} J_{bd} -  \delta_{bd} J_{ac} + \delta_{ad} J_{bc}\\
    [J_{ab}, P_c] &= \delta_{bc} P_a - \delta_{ac} P_b\\
    [J_{ab}, H] &=0\\
    [J_{ab}, D] &=0,
  \end{split}
\end{equation}
and the additional Lie brackets not involving $J_{ab}$ are subject
only to the Jacobi identity.

If $d=1$ then there are no rotations and hence any three-dimensional
Lie algebra is Lifshitz. These were classified by
Bianchi~\cite{Bianchi} and we will not discuss them further in this
paper, except briefly when discussing the associated spacetimes. It
will be convenient to break the discussion of the $d>1$ algebras into
three cases: $d = 2$, $d=3$ and $d > 3$. For the following it will be
useful to recall the aristotelian spacetimes.

The classification of aristotelian Lie algebras and aristotelian
homogeneous spacetimes was done in \cite{Figueroa-OFarrill:2018ilb},
which we recall in Table~\ref{tab:aristotelian-spacetimes}. The
notation is as in \cite[Table 2]{Figueroa-OFarrill:2018ilb}, except
that we have introduced $\NN$ (as in ``Nilmanifold'') for the Heisenberg
group which in \cite{Figueroa-OFarrill:2018ilb} was called
``A24''. Since every aristotelian Lie algebra gives rise to a unique
aristotelian Lie pair, this Table provides both the Lie algebras and
the classification of simply-connected aristotelian spacetimes up to
isomorphism.

\begin{table}[h!]
  \centering
  \caption{Aristotelian Lie algebras and simply-connected homogeneous
    ($d+1$)-dimensional aristotelian spacetimes}
  \label{tab:aristotelian-spacetimes}
  \rowcolors{2}{blue!10}{white}
  \begin{tabular}{l|>{$}c<{$}|*{2}{>{$}l<{$}}|l}
    \toprule
    \multicolumn{1}{c|}{Name} & d & \multicolumn{2}{c|}{Nonzero Lie
                                    brackets in addition to $[\J,\J] =
                                    \J $ and $[\J,\P] = \P$}& \multicolumn{1}{c}{Comments}\\
    \midrule
    $\zS$ & \geq 0 & & & static \\
    $\zTS$ & \geq 1 & [H,\P] = \P & & torsional static\\
    $\RR\times\HH^d$ & \geq 2 & & [\P,\P] = \J & \\
    $\RR\times\SS^d$ & \geq 2 & & [\P,\P] = - \J & \\
    $\NN$ & 2 & & [\P,\P] = H & Heisenberg group\\
    \bottomrule
  \end{tabular}
  \caption*{The bases are such that the subalgebra $\h$ of the
    spacetimes is spanned by $\J$.}
\end{table}

\subsection{Lifshitz algebras with $d>3$}
\label{sec:alas-d>3}

If $d>3$, then the $\so(d)$-equivariance of the brackets ---
equivalently, the Jacobi identity involving $J_{ab}$ --- implies that
the brackets involving $D$ are given by
\begin{equation}
  [D,H] = a H + b D \qquad\text{and}\qquad [D,P_a] = \lambda P_a.
\end{equation}
In particular, $D$ and $H$ span a two-dimensional Lie subalgebra.  Up
to isomorphisms, there are precisely two such Lie algebras and we can
therefore assume that $[D,H] = a H$, where $a \in \{0,1\}$.
In any case we see that $[D,-]$ is diagonal and hence $D$ is a grading
element.  Since $D$ does not appear in the RHS of any Lie brackets,
we conclude that $D$ is a grading element for the aristotelian Lie
subalgebra spanned by the remaining generators: $J_{ab},P_a, H$.

\subsection{Lifshitz algebras with $d=3$}
\label{sec:alas-d=3}

Now let $d=3$.  The $\so(3)$-equivariance of the Lie brackets
restricts their form to
\begin{equation}
  \label{eq:lifshitz-d=3}
  \begin{split}
    [H,P_a] &= \alpha \epsilon_{abc} J^{bc} + \beta P_a\\
    [Z,P_a] &= \gamma \epsilon_{abc} J^{bc} + \delta P_a\\
    [D,H] &= a H\\
    [P_a, P_b] &= \varepsilon \epsilon_{abc} P^c + \eta J_{ab}.
  \end{split}
\end{equation}
The terms with coefficients $\alpha,\gamma,\varepsilon$ are unique to
$d=3$.  We will show that they can be set to zero, which extends the
results of $d>3$ also to $d=3$.  We start with the observation that by
shifting $P_a \mapsto P_a + \tfrac\varepsilon4 \epsilon_{abc} J^{bc}$,
we can (and will) set $\varepsilon=0$ without loss of generality.
Since we are in $d=3$, it is more convenient to change basis to
$J_{ab} = -\epsilon_{abc} J^c$, where $J_a = -\tfrac12 \epsilon_{abc}
J^{bc}$.  In this way, $[J_a,J_b] = \epsilon_{abc} J^c$, et cetera.
Independently of the $[D,H]$ bracket, the $[H,\P,\P]$ Jacobi forces
$\alpha = 0$ and the $[D,\P,\P]$ Jacobi forces $\gamma =0$.  Since
$\alpha = \gamma = 0$, we see that the possible new terms in $d=3$ do
not arise and hence the results for $d>3$ also apply to $d=3$.  In
other words, $J_a, P_a, H$ span an aristotelian Lie algebra and $D$ is
a grading element.

\subsection{Lifshitz algebras with $d=2$}
\label{sec:alas-d=2}

We now consider the case $d=2$.  Now $J_{ab} = - \epsilon_{ab} J$ spans
$\r \cong \so(2)$, which is abelian.  The (non-zero) brackets are now
\begin{equation}
  \label{eq:lif-lie-alg-d=2}
  \begin{split}
    [J,P_a] &= \epsilon_{ab} P_b\\
    [P_a,P_b] &= \epsilon_{ab} (\alpha J + \beta H + \gamma D)\\
    [D,H] &= a J + b H + c D \\
    [D,P_a] &= \delta P_a + \xi \epsilon_{ab} P_b\\
    [H,P_a] &= \theta P_a + \psi \epsilon_{ab} P_b.
  \end{split}
\end{equation}
The first observation is that we can set $\xi=\psi=0$ without loss of
generality by redefining $D \mapsto D - \xi J$ and $H \mapsto H - \psi
J$.  Having done that, there are three Jacobi identities which need to
be checked: $[H,\P,\P]$, $[D,\P,\P]$ and $[D,H,\P]$.  (The $[\P,\P,\P]$ Jacobi identity is
identically satisfied because there are only two $P_a$.)  Since $[D,-]$ and
$[H,-]$ act diagonally on $P_a$, $[[D,H],-]$ must act trivially.  This
says that $J$ cannot appear in $[D,H]$ and hence $a = 0$.  In
addition, we see that $b\theta + c \delta = 0$.  Since $a = 0$, $D,H$
span a two-dimensional Lie algebra and hence we can change basis so
that $[D,H] = b H$ where $b \in \{0,1\}$.  Therefore again we see that
$D$ is a grading element for the aristotelian Lie algebra spanned by
$J,P_a,H$.

\subsection{Summary}
\label{sec:summary}

In summary, we have seen that every Lifshitz Lie algebra
is obtained by adjoining a grading element to an aristotelian Lie
algebra.  It is a now simple matter to determine the possible
gradings of the aristotelian Lie algebras in
Table~\ref{tab:aristotelian-spacetimes}.

\begin{theorem}\label{thm:lif-lie-alg}
  Every Lifshitz Lie algebra is isomorphic to precisely one of the Lie
  algebras in Table~\ref{tab:lifshitz}.
\end{theorem}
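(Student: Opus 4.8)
The plan is to exploit the structural result just established---that every Lifshitz Lie algebra arises by adjoining a grading element $D$ to one of the five aristotelian Lie algebras $\a$ of Table~\ref{tab:aristotelian-spacetimes}---and thereby reduce the classification to enumerating the admissible gradings of each such $\a$. Concretely, I would present each Lifshitz Lie algebra as a semidirect product $\a \rtimes \RR D$ in which $\ad_D$ is a diagonalisable, $\so(d)$-equivariant derivation fixing the rotations, and then classify these derivations modulo the residual freedom: rescaling $D$, applying $\so(d)$-equivariant automorphisms of $\a$, and exchanging the two scalar directions $H \leftrightarrow D$ (legitimate because the Definition treats $H$ and $D$ symmetrically as scalars).

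The enumeration is then short. Since the reductions above already diagonalise $\ad_D$ in every range of $d$, a grading is recorded by the weights $\ad_D(J_{ab})=0$, $\ad_D(P_a)= p\,P_a$, $\ad_D(H)= h\,H$, i.e.\ by a pair $(p,h)$, and requiring the defining brackets of $\a$ to be weight-homogeneous pins down the allowed pairs. The static algebra $\zS$ imposes no constraint and yields $\a_1$ (trivial grading), $\a_2$ and the family $\a_3^z$; the torsional static algebra $\zTS$ forces $h=0$ (from $[H,\P]=\P$) and, after relabelling the two scalars, contributes only $\a_3^{z=0}$; the algebras $\RR\times\HH^d$ and $\RR\times\SS^d$ force $p=0$ (from $[\P,\P]=\pm\J$) and give $\a_4^\pm$ and $\a_5^\pm$; and for $d=2$ the Heisenberg algebra $\NN$ forces $h=2p$ (from $[\P,\P]=H$), producing $\a_6$ and $\a_7$. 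Each representative is a Lie algebra by construction, so rescaling $D$ to normalise the nonzero weight and reading off representatives recovers precisely the seven classes of Table~\ref{tab:lifshitz}, completing the existence half.

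For uniqueness I would exhibit isomorphism invariants separating the classes. The isomorphism type of the subalgebra generated by the rotations $J_{ab}$ and the translations $P_a$---which is $\iso(d)$ when $[\P,\P]=0$, $\so(d,1)$ or $\so(d+1)$ when $[\P,\P]=\pm\J$, and (for $d=2$) a rotation-extended Heisenberg algebra when $[\P,\P]=H$---distinguishes the three bracket-types and the $\pm$ sign at one stroke. Within the $\so(d\pm 1)$ and Heisenberg types, whether the two-dimensional scalar isotype $\langle H,D\rangle$ is abelian (equivalently, the dimension of the centre) separates $\a_4^\pm$ from $\a_5^\pm$ and $\a_6$ from $\a_7$. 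Finally, for the $\iso(d)$ family the weights of $\ad_D$ on the vector $\P$ and on the scalar $H$ define a projective invariant $[\,p:h\,]\in\PP^1$: the degenerate point $[0:0]$ is $\a_1$, the point $[0:1]$ is $\a_2$, and $[1:z]$ is $\a_3^z$, so $z$ is a genuine modulus and no two of these coincide.

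The routine content is the weight bookkeeping of the second paragraph; the delicate point is uniqueness, and in particular that the scalar-exchange symmetry and the shifts $D\mapsto D+cH$ do not secretly identify distinct entries. The crux is that $\P$ is a vector and $H$ a scalar, so no automorphism can interchange them; the only remaining freedom, the exchange $H\leftrightarrow D$, cannot alter $z$ once $z\neq 0$, because there $\ad_H$ is nilpotent rather than semisimple, so $H$ fails to be a grading element and $D$ is pinned down up to scale. Verifying this carefully, so that the parameter $z$ carries no hidden $z\mapsto 1/z$ (or other) identification, is the main obstacle; everything else follows from the invariants above.
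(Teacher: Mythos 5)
Your proposal is correct and follows essentially the same route as the paper: reduce the problem to enumerating the admissible gradings of the five aristotelian Lie algebras, work out the weight constraints case by case (no constraint for $\zS$; $h=0$ for $\zTS$; $p=0$ for $\RR\times\HH^d$ and $\RR\times\SS^d$; $h=2p$ for $\NN$), and normalise by rescaling $D$ and exchanging or shifting the scalars. The one place you go beyond the paper is the explicit uniqueness argument via isomorphism invariants (the type of $\langle\J,\P\rangle$, the abelianness of the scalar pair, and the projective weight $[p:h]$, with the observation that the vector/scalar distinction forbids a $z\mapsto 1/z$ identification), which the paper leaves implicit; this is a welcome addition rather than a deviation.
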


\begin{proof}
  As explained above, we need to determine the possible gradings of
  the aristotelian Lie algebras.  These Lie algebras are in one-to-one
  correspondence with the homogeneous aristotelian spacetimes in
  Table~\ref{tab:aristotelian-spacetimes}.  We go through them in
  turn, using as labels the names of the corresponding spacetimes.

  \begin{itemize}
  \item[($\zS$)]  Any grading is possible: $[D,\P]=\lambda \P$ and
    $[D,H]= \mu H$.  We distinguish three cases:
    \begin{enumerate}
    \item  If $\lambda = \mu = 0$, we obtain $\a_1$ in
      Table~\ref{tab:lifshitz}, which is simply the direct sum $\s
      \oplus \RR D$, where $\s$ is the static aristotelian Lie
      algebra.
    \item If $\lambda =0$ but $\mu \neq 0$, we may rescale $D \mapsto
      \mu^{-1}D$ and we are left with $[D,\P]=0$ and $[D,H]=H$, which
      could be thought of as the case $z=\infty$ of the Lifshitz Lie
      algebra~\eqref{eq:lifshitz-lie-algebra}.  We label it $\a_2$ in
      Table~\ref{tab:lifshitz}.
    \item If $\lambda \neq 0$, we may rescale $D \mapsto \lambda^{-1}
      D$ so that $[D,\P]=\P$ and, letting $z := \mu/\lambda$, $[D,H]=z
      H$.  This is the Lifshitz algebra in
      equation~\eqref{eq:lifshitz-lie-algebra}, which we label
      $\a_3^z$ in Table~\ref{tab:lifshitz}.
    \end{enumerate}
  \item[($\zTS$)]  In this case $[D,H]=0$ and $[D,\P]=\lambda\P$.
    We must distinguish two cases, according to whether or not
    $\lambda=0$.
    \begin{enumerate}
    \item If $\lambda=0$, we may rename $H \leftrightarrow D$
      to arrive at a Lie algebra isomorphic to $\a_3^{z=0}$.
    \item If $\lambda \neq 0$, we may rescale $D \mapsto
      \lambda^{-1}D$ and arrive at $[D,\P]=\P$ and then change basis
      $H \mapsto H-D$, so that $[H,\P]=0$. The resulting Lie
      algebra is also isomorphic to $\a_3^{z=0}$.
    \end{enumerate}
  \item[($\RR \times \HH^d$)]  Here $[D,\P]=0$, so we have $[D,H] =
    \mu H$ and we must distinguish two cases depending on whether or not $\mu
    =0$.  As in the previous case, $\mu  =0$ says $D$ is central and
    gives a direct sum Lie algebra, whereas if $\mu\neq 0$, we may
    rescale $D$ so that $[D,H]=H$.  The former is labelled $a_4^+$ in
    Table~\ref{tab:lifshitz} and the latter is labelled $a_5^+$.
  \item[($\RR \times \SS^d$)]  This case is similar to the above and
    leads to the Lie algebras $a_4^-$ and $a_5^-$ in
    Table~\ref{tab:lifshitz}.
  \item[($\NN$)] The only compatible grading is $[D,\P]=\lambda P$ and
    $[D,H]=2 \lambda H$.  If $\lambda = 0$ we have the Lie algebra
    $\a_6$ in Table~\ref{tab:lifshitz} and if $\lambda \neq 0$, we may
    rescale $D \mapsto \lambda^{-1}D$ so that $[D,\P]=\P$ and
    $[D,H]=2H$, which is Lie algebra $\a_7$.
  \end{itemize}
\end{proof}

The Lifshitz Lie algebras are related via contractions which are
depicted in Figure~\ref{fig:limits-aristo-lifshitz}. Since the
homogeneous Lifshitz spacetimes are in one-to-one correspondence with
the Lifshitz Lie algebras, that figure also depicts limits between the
spacetimes.  Every node in the figure represents an isomorphism class
of Lifshitz Lie algebras, with the red nodes only arising when $d=2$.
The thin edges correspond to contractions between the algebras,
whereas the thick edge is the continuum $\a_3^z$.  We see that
$\lim_{z\to\pm\infty} \a_3^z = \a_2$, so that $z$ should really be
thought of parametrising a circle.  We can make this manifest by
defining $z = \cot\tfrac\theta 2$ with $\theta \in [0,2\pi]$.  There
are two distinguished points in this circle: $\theta = 2 \cot^{-1}2$,
corresponding to $z=2$, and $\theta = 0$ (equivalently, $\theta =
2\pi$) which corresponds to $z  = \infty$.

\begin{figure}[h!]
  \centering
  \begin{tikzpicture}[>=latex, shorten >=2.5pt, shorten <=2.5pt, x=1cm,y=1cm]
    %
    %
    %
    %
    \coordinate [label=above:{\small $\a_1$}] (a1) at (0,2);
    \coordinate [label=below:{\small $\a_2$}] (a2) at (0,0);
    \coordinate [label=above left:{\small $\a_4^\pm$}] (a4) at (-1,2); 
    \coordinate [label=above left:{\small $\a_5^\pm$}] (a5) at (-2,1); 
    \coordinate [label=below left:{\small $\a_3^2$}] (a3) at (0.8,-0.4);
    \coordinate [label=above right:{\color{red}\small $\a_6$}] (a6) at (1,2);
    \coordinate [label=above right:{\color{red}\small $\a_7$}] (a7) at (2,1);
    %
    %
    \draw[-,line width=3pt,color=green!70!black] (0,-1) circle (1);
    %
    %
    \draw [->,line width=0.5pt,color=blue] (a5) -- (a4); 
    \draw [->,line width=0.5pt,color=blue] (a5) -- (a2); 
    \draw [->,line width=0.5pt,color=blue] (a4) -- (a1); 
    \draw [->,line width=0.5pt,color=blue] (a2) -- (a1); 
    \draw [->,line width=0.5pt,color=red] (a7) -- (a6); 
    \draw [->,line width=0.5pt,color=red] (a6) -- (a1); 
    \draw [->,line width=0.5pt,color=red] (a7) -- (a3); 
    %
    %
    \draw (0,-2) node[below] {\color{green!30!black}\small $\a_3^z$};
    %
    %
    \foreach \point in {a1,a2,a3,a4,a5}
    \filldraw [color=blue!70!black,fill=blue!50!white] (\point) circle (2pt);
    \foreach \point in {a6,a7}
    \filldraw [color=red!70!black,fill=red!50!white] (\point) circle (2pt);
  \end{tikzpicture}  
  \caption{Contractions between Lifshitz Lie algebras}
  \caption*{(The red nodes and their corresponding arrows only exist if $d=2$.)}
  \label{fig:limits-aristo-lifshitz}
\end{figure}

\subsection{Central extensions}
\label{sec:central-extensions}

As we will recall in Section~\ref{sec:coadj-orbits}, homogeneous
symplectic manifolds (elementary particles in the language of Souriau
\cite{MR1461545}) of a Lie group $G$ are given locally by coadjoint
orbits of $G$ or possibly of a one-dimensional central extension of
$G$. As a first step in the determination of homogeneous symplectic
manifolds of the Lifshitz Lie groups, we will work out the central
extensions of the Lifshitz Lie algebras in Table~\ref{tab:lifshitz}.

Central extensions of a Lie algebra $\g$ are classified by the second
Chevalley--Eilenberg cohomology group $H^2(\g)$ (with values in the
trivial one-dimensional representation).  Since a Lifshitz Lie algebra
$\g$ contains a rotational subalgebra $\r$ which acts reducibly on the
Chevalley--Eilenberg complex and trivially on the cohomology,
$H^2(\g)$ can be calculated from the (typically much smaller)
subcomplex consisting of $\r$-invariant cochains.

For the purposes of this paper, namely the determination of
homogeneous symplectic manifolds of a Lifshitz Lie group there is one
additional subtlety. Not every central extension of the Lie algebra
integrates to a central extension of the Lie group $G$; although they
always do if we were to take $G$ to be simply connected. For example,
it follows from $[J,D]=Z$, that the adjoint action of $J$ on $D$
integrates to
\begin{equation}
  \Ad_{\exp(\theta J)} D = \exp(\theta \ad_J) D = D + \theta Z,
\end{equation}
which is not periodic in $\theta$. Hence it is not a central extension
of the Lie group $G$ where the rotational group is $\SO(2)$; although
it is a central extension of the universal covering group where the
``rotation'' subgroup is $\RR$. As explained briefly in
Section~\ref{sec:coadj-orbits}, we restrict\footnote{Let us however
  note that we could envision interesting physics for the case where
  one drops this restriction: for example, Lifshitz anyons in $d=2$ as
  in \cite{Jokela:2016nsv}.} to compact rotations and therefore these
central extension are not relevant for the determination of
homogeneous symplectic manifolds with symmetry group $G$.  This means
we can actually restrict ourselves not just to the $\r$-invariant
subcomplex, but even further to the $\r$-basic subcomplex which
calculates the relative cohomology $H^2(\g;\r)$. We choose to
calculate $H^2(\g)$ below, since this may be of independent interest,
but later in the paper we shall only be interested in some of these
central extensions.

By introducing a suitable parameter, we may treat the Lie algebras
$\a_1$, $\a_2$ and $\a_3^z$ together.  Similarly we may treat $\a_4^\pm$
and $\a_5^\pm$ together and also $\a_6$ and $\a_7$ together.  The
results are summarised in Table~\ref{tab:lifshitzcent}, where central
extensions in $H^2(\g)$ which are not in $H^2(\g;\r)$ have been
parenthesised and we shall ignore them in the remainder of the paper.

\begin{table}[h!]
\setlength{\tabcolsep}{10pt}
  \centering
  \caption{Nontrivial central extensions of the Lifshitz Lie algebras in Table~\ref{tab:lifshitz}}
  \label{tab:lifshitzcent}
  \resizebox{\linewidth}{!}{
  \setlength{\extrarowheight}{2pt}
  \begin{tabular}{>{$}l<{$} | >{$}r<{$} | *{2}{>{$}c<{$}|} *{4}{>{$}l<{$}}}
    \toprule
    \multicolumn{1}{c|}{$\g$}       & \multicolumn{1}{c|}{$d$} &  \multicolumn{1}{c|}{$\dim H^2(\g)$} & \multicolumn{1}{c|}{$\dim H^2(\g,\r)$} & \multicolumn{4}{c}{Nonzero Lie brackets of central extensions}       \\\midrule
    \rowcolor{blue!10} \a_1          & \geq 3                   & 1 & 1                                                        & [D,H] =Z &                               &             &             \\
    \rowcolor{blue!10}               & 2                        & 4 & 2                                                        & [D,H]=Z  & [P_a,P_b] = \epsilon_{ab} Z_P & ([J,D]=Z_D) & ([J,H]=Z_H) \\
    \a_2                             & \geq 3                   & 0 & 0                                                        &          &                               &                           \\
                                     & 2                        & 2 & 1                                                        &          & [P_a,P_b] = \epsilon_{ab} Z_P & ([J,D]=Z_D) &             \\
\rowcolor{blue!10}    \a_3^{z\neq 0} & \geq 3                   & 0 & 0                                                        &          &                               &             &             \\
\rowcolor{blue!10}                   & 2                        & 1 & 0                                                        &          &                               & ([J,D]=Z_D) &             \\
     \a_3^{z=0}                      & \geq 3                   & 1 & 1                                                        & [D,H] =Z &                               &             &             \\
                                     & 2                        & 3 & 1                                                        & [D,H]=Z  &                               & ([J,D]=Z_D) & ([J,H]=Z_H) \\
\rowcolor{blue!10}    \a_4^\pm       & \geq 2                   & 1 & 1                                                       & [D,H] =Z &                               &             &             \\
     \a_5^\pm                        & \geq 2                   & 0 & 0                                                        &          &                               &             &             \\
\rowcolor{blue!10}    \a_6           & 2                        & 1 & 0                                                       &          &                               & ([J,D]=Z_D) &             \\
     \a_7                            & 2                        & 1 & 0                                                        &          &                               & ([J,D]=Z_D) &             \\
    \bottomrule
  \end{tabular}
}
\caption*{This table shows the central extensions of the Lifshitz
  algebras $\g$. The dimension of the cohomology group $H^2(\g)$
  depends in some cases on the dimension $d$ of the Lie algebra $\g$.
  For compact rotations, the central extensions in parentheses do not
  integrate to the group and are related to the relative cohomology
  $H^2(\g;\r)$, where $\r$ is spanned by the rotations.}
\end{table}

\subsubsection{Central extensions of $\a_1$, $\a_2$ and $\a_3^z$}
\label{sec:centr-extens-a_1}

We introduce a parameter $w \in \{0,1\}$ and modify the Lie brackets by $[D,P] = w
P$.  We can then treat all three Lie algebras simultaneously with
$\a_1$ corresponding to $w=z=0$, $\a_2$ corresponding to $w=0,z=1$ and
$\a_3^z$ corresponding to $w=1$.  Any other choice of $w,z$ is related
to one of these by rescaling the generator $D$.

Let $\g$ (depending on $w,z$) be one of these Lie algebras.  A basis
for $\g$ is $J_{ab}, P_a, H, D$ and the canonical dual basis for
$\g^*$ is $\lambda^{ab},\pi^a,\eta,\delta$.  We will use the notation
$\left<\cdots\right>$ to mean the real span of the enclosed vectors,
so that $\g = \left<J_{ab}, P_a, H, D\right>$ and
$\g^*=\left<\lambda^{ab},\pi^a,\eta,\delta\right>$.  We let $\r =
\left<J_{ab}\right>$.  We are interested in the
following fragment of the $\r$-invariant subcomplex of the
Chevalley--Eilenberg complex
\begin{equation}
  \begin{tikzcd}
    (\g^*)^\r \arrow[r,"\d"] & (\wedge^2\g^*)^\r \arrow[r,"\d"] & (\wedge^3\g^*)^\r,
  \end{tikzcd}
\end{equation}
where
\begin{equation}\label{eq:r-inv-1-cochains}
  (\g^*)^\r = \left< \eta, \delta, \tfrac12 \epsilon_{ab}\lambda^{ab}\right>
\end{equation}
and
\begin{equation}\label{eq:r-inv-2-cochains}
  (\wedge^2\g^*)^\r = \left<\eta \wedge \delta,
    \tfrac12\epsilon_{ab}\lambda^{ab} \wedge \delta,
    \tfrac12\epsilon_{ab}\lambda^{ab} \wedge \eta,
    \tfrac12\epsilon_{ab}\pi^a \wedge \pi^b, \tfrac12 \epsilon_{abc}
    \lambda^{ab} \wedge \pi^c, \tfrac18 \epsilon_{abcd}\lambda^{ab}
    \wedge \lambda^{cd}\right>,
\end{equation}
and where a term involving the Levi-Civita $\epsilon$ symbol only
appears in the relevant dimension.

The Chevalley--Eilenberg differential on generators is given by
\begin{equation}
  \d \lambda^{ab} = - \lambda^a{}_c \wedge \lambda^{cb}, \qquad \d
  \pi^a = - \lambda^a{}_b \wedge \pi^b - w \delta \wedge \pi^a, \qquad
  \delta \eta = -z \delta \wedge \eta \qquad\text{and}\qquad \d \delta
  = 0.
\end{equation}
We see that the space of 2-coboundaries $B^2(\g)^\r = \d (\g^*)^\r \subset
(\wedge^2\g^*)^\r$ is given by
\begin{equation}
  B^2(\g)^\r =
  \begin{cases}
    \left<\delta \wedge \eta \right> & \text{if $z \neq 0$,}\\
    0 & \text{if $z = 0$.}
  \end{cases}
\end{equation}

We calculate $\d: (\wedge^2\g^*)^\r \to (\wedge^3\g^*)^\r$ to obtain
\begin{equation}
  \begin{split}
    \d (\delta \wedge \eta) &= 0\\
    \d (\tfrac12 \epsilon_{ab}\lambda^{ab} \wedge \delta) &= 0\\
    \d (\tfrac12 \epsilon_{ab} \lambda^{ab} \wedge \eta) &= z  \epsilon_{ab} \lambda^{ab} \wedge \delta \wedge \eta\\
    \d (\tfrac12 \epsilon_{ab} \pi^a \wedge \pi^b) &= -2 w \epsilon_{ab} \pi^a \wedge \pi^b \wedge \delta\\
    \d (\tfrac12 \epsilon_{abc}\lambda^{ab} \wedge \pi^c) &= -2 \epsilon_{abc} \lambda^{ab} \wedge \pi^c \wedge \delta\\
    \d (\tfrac18 \epsilon_{abcd} \lambda^{ab} \wedge \lambda^{cd}) &= -2 \epsilon_{abcd} \lambda^a{}_e \wedge \lambda^{eb} \wedge \lambda^{cd}.
  \end{split}
\end{equation}
We see that the space of 2-cocycle $Z^2(\g)^\r = \ker \d :
(\wedge^2\g^*)^\r \to (\wedge^3\g^*)^\r$ is given by
\begin{equation}
  Z^2(\g)^\r =
  \begin{cases}
    \left<\delta \wedge \eta, \tfrac12 \epsilon_{ab} \pi^a \wedge \pi^b, \tfrac12 \epsilon_{ab}\lambda^{ab} \wedge \delta,  \tfrac12 \epsilon_{ab} \lambda^{ab} \wedge \eta\right> & \text{if $w=z=0$}\\
    \left<\delta \wedge \eta, \tfrac12 \epsilon_{ab} \pi^a \wedge  \pi^b, \tfrac12 \epsilon_{ab}\lambda^{ab} \wedge \delta\right> & \text{if $w=0$ and $z \neq 0$}\\
    \left<\delta \wedge \eta, \tfrac12 \epsilon_{ab} \lambda^{ab} \wedge \eta, \tfrac12 \epsilon_{ab} \lambda^{ab} \wedge \delta\right> & \text{if $w=1$ and $z=0$}\\
    \left<\delta \wedge \eta, \tfrac12 \epsilon_{ab}\lambda^{ab}  \wedge \delta\right> & \text{if $w=1$ and $z \neq 0$.}
  \end{cases}
\end{equation}

Therefore we see that $H^2(\g) = Z^2(\g)^\r / B^2(\g)^\r$ behaves
quite differently in $d=2$ and $d \geq 3$.  The latter is given by
\begin{equation}
  H^2(\g)_{d\geq 3} =
  \begin{cases}
    \left<[\delta \wedge \eta]\right> & \text{if $z=0$}\\
    0 & \text{if $z\neq 0$,}
  \end{cases}
\end{equation}
where here and in the sequel square brackets denotes the cohomology
class of the enclosed cocycle.  We see that for $d\geq 3$,
$\a_2$ and $\a_3^{z\neq 0}$ do not admit any nontrivial central
extensions, whereas $\a_1$ and $\a_3^{z=0}$ admit a one-dimensional
nontrivial central extension with bracket $[D,H] = Z$.

If $d=2$ things are more complicated:
\begin{equation}
  H^2(\g)_{d=2} =
  \begin{cases}
    \left<[\tfrac12 \epsilon_{ab} \lambda^{ab} \wedge \delta]\right> & \text{if $w=1$ and $z\neq 0$}\\
    \left<[\tfrac12 \epsilon_{ab} \lambda^{ab} \wedge \delta],  [\tfrac12\epsilon_{ab}\pi^a \wedge \pi^b]\right> & \text{if $w=0$ and $z \neq 0$}\\
    \left<[\delta \wedge \eta], [\tfrac12 \epsilon_{ab} \lambda^{ab} \wedge \delta], [\tfrac12 \epsilon_{ab} \lambda^{ab} \wedge \eta]\right> & \text{if $w=1$ and $z=0$}\\
    \left<[\delta \wedge \eta], [\tfrac12 \epsilon_{ab} \lambda^{ab}
      \wedge \delta], [\tfrac12 \epsilon_{ab} \lambda^{ab} \wedge
      \eta],  [\tfrac12\epsilon_{ab}\pi^a \wedge \pi^b]\right> &
    \text{if $w=z=0$.}
  \end{cases}
\end{equation}
So that if $d=2$, the Lifshitz Lie algebra $\a_3^{z\neq 0}$ admits a
nontrivial central extension with additional bracket $[J,D]=Z_D$; the
Lie algebra $\a_2$ admits a two-dimensional space of nontrivial
central extensions with brackets $[J,D]=Z_D$ and $[P_a,P_b]=
\epsilon_{ab} Z_P$; the Lie algebra $\a_3^{z=0}$ admits a
three-dimensional space of nontrivial central extensions with brackets
$[D,H]=Z$, $[J,D]=Z_D$ and $[J,H]=Z_H$; and $\a_1$ admits a
four-dimensional space of nontrivial central extensions with brackets
$[D,H]=Z$, $[J,D]=Z_D$, $[J,H]=Z_H$ and $[P_a,P_b]=\epsilon_{ab} Z_P$.

\subsubsection{Central extensions of $\a_4^\pm$ and $\a_5^\pm$}
\label{sec:centr-extens-a_4}

We treat these two Lie algebras simultaneously by introducing a
parameter $z \in \{0,1\}$ and defining the bracket $[D,H]=zH$.  If
$z=0$ we are in $\a_4^\pm$ and if $z=1$ we are in $\a_5^\pm$.  The
bases for $\g$ and $\g^*$ are as in the previous section and the
spaces of $\r$-invariant $1$- and $2$-cochains are as before and given
in equations~\eqref{eq:r-inv-1-cochains} and
\eqref{eq:r-inv-2-cochains}, respectively.  The Chevalley--Eilenberg
differential is of course different and given on generators by
\begin{equation}
  \d \lambda^{ab} = - \lambda^a{}_c \wedge \lambda^{cb} \mp \pi^a
  \wedge \pi^b, \qquad \d \pi^a = - \lambda^a{}_b \wedge \pi^b, \qquad
  \d \eta = - z \delta \wedge \eta \qquad\text{and}\qquad \d \delta = 0.
\end{equation}
So that the space of 2-coboundaries is now given by
\begin{equation}
  B^2(\g)^\r = \left<z \delta \wedge \eta, \tfrac12 \epsilon_{ab}
    \pi^a \wedge \pi^b\right>.
\end{equation}
We calculate the differential on the 2-cochains and obtain
\begin{equation}
  \begin{split}
    \d (\delta \wedge \eta) &= 0 \\
    \d (\tfrac12 \epsilon_{ab} \lambda^{ab} \wedge \delta) &= \mp \tfrac12 \epsilon_{ab} \pi^a \wedge \pi^b \wedge \delta\\
    \d (\tfrac12 \epsilon_{ab} \lambda^{ab} \wedge \eta) &= \mp \tfrac12 \epsilon_{ab} \pi^a \wedge \pi^b \wedge \eta + \tfrac12 z \epsilon_{ab} \lambda^{ab} \wedge \delta \wedge \eta\\
    \d (\tfrac12 \epsilon_{ab} \pi^a \wedge \pi^b) &= 0\\
    \d (\tfrac12 \epsilon_{abc} \lambda^{ab} \wedge \pi^c &= \mp  \tfrac12 \epsilon_{abc} \pi^a \wedge \pi^b \wedge \pi^c\\
    \d (\tfrac18 \epsilon_{abcd} \lambda^{ab} \wedge \lambda^{cd}) &=  -\tfrac14 \epsilon_{abcd} \lambda^a{}_e \wedge \wedge \lambda^{eb}
    \wedge \lambda^{cd} \mp \tfrac14 \epsilon_{abcd} \pi^a \wedge \pi^b \wedge \lambda^{cd},
  \end{split}
\end{equation}
so that the space of $2$-cocycles is
\begin{equation}
  Z^2(\g)^\r = \left<\eta \wedge \delta, \tfrac12 \epsilon_{ab} \pi^a  \wedge \pi^b\right>.
\end{equation}
In summary, the cohomology is then
\begin{equation}
  H^2(\g) =
  \begin{cases}
    \left<[\eta \wedge \delta]\right> & \text{if $z=0$}\\
    0 & \text{if $z\neq 0$.}
  \end{cases}
\end{equation}
Hence $\a_4^\pm$ admits a one-dimensional non-trivial central
extension with bracket $[D,H]=Z$ and $\a_5^\pm$ admits none.

\subsubsection{Central extensions of $\a_6$ and $\a_7$}
\label{sec:centr-extens-a_6}

Finally, we introduce a parameter $w \in \{0,1\}$ in the brackets
$[D,P]=wP$ and $[D,H]=2w H$ so that if $w=0$ we are in $\a_6$ and if
$w=1$ we are in $\a_7$.  We are in $d=2$ here, so that the
$\r$-invariant $1$- and $2$-cochains are
\begin{equation}
  (\g^*)^\r = \left<\eta,\delta, \tfrac12 \epsilon_{ab} \lambda^{ab}\right>
\end{equation}
and
\begin{equation}
  (\wedge^2\g^*)^\r = \left<\eta \wedge \delta, \tfrac12 \epsilon_{ab}
    \lambda^{ab} \wedge \delta, \tfrac12 \epsilon_{ab} \lambda^{ab}
    \wedge \eta, \tfrac12 \epsilon_{ab} \pi^a \wedge \pi^b\right>.
\end{equation}
The action of the differential on generators is now
\begin{equation}
  \d \lambda^{ab} = 0, \qquad \d \eta = -2 w \delta \wedge \eta -
  \tfrac12 \epsilon_{ab} \pi^a \wedge \pi^b, \qquad \d \pi^a = -
  \lambda^a{}_b \wedge \pi^b - w \delta \wedge \pi^a
  \qquad\text{and}\qquad \d \delta = 0.
\end{equation}
The 2-coboundaries are then
\begin{equation}
  B^2(\g)^\r = \left<\tfrac12 \epsilon_{ab} \pi^a \wedge \pi^b + 2 w \delta \wedge \eta\right>.
\end{equation}

We calculate the differential on $\r$-invariant cochains to be
\begin{equation}
  \begin{split}
  \d (\eta \wedge \delta) &= - \tfrac12 \epsilon_{ab} \pi^a \wedge \pi^b \wedge \delta\\
  \d (\tfrac12 \epsilon_{ab} \lambda^{ab} \wedge \eta)&= w \epsilon_{ab} \lambda^{ab} \wedge \delta \wedge \eta + \tfrac12 \lambda^{ab} \wedge \pi_a \wedge \pi_b\\
  \d(\tfrac12 \epsilon_{ab} \lambda^{ab} \wedge \delta) &= 0\\
  \d (\tfrac12 \epsilon_{ab} \pi^a \wedge \pi^b) &= - w \epsilon_{ab} \pi^a \wedge \pi^b \wedge \delta.
\end{split}
\end{equation}
We see that the space of 2-cocycles is given by
\begin{equation}
  Z^2(\g)^\r = \left<\tfrac12 \epsilon_{ab} \pi^a \wedge \pi^b + 2 w  \delta \wedge \eta, \tfrac12 \epsilon_{ab} \lambda^{ab} \wedge \delta\right>,
\end{equation}
so that the cohomology is given by
\begin{equation}
  H^2(\g) = \left<[\tfrac12 \epsilon_{ab} \lambda^{ab} \wedge \delta]\right>.
\end{equation}
Therefore both $\a_6$ and $\a_7$ admit a one-dimensional nontrivial
central extension with bracket $[J,D]= Z_D$.

\section{Spatially isotropic homogeneous Lifshitz spacetimes}
\label{sec:homog-lifs-spac}

In this section we classify the homogeneous spacetimes associated to
the Lifshitz algebras in Table~\ref{tab:lifshitz}.  As discussed in
the introduction, we are interested in two kinds of homogeneous
spacetimes. Firstly, we have the $(d+2)$-dimensional Lifshitz
spacetimes which are described infinitesimally by Klein pairs of the
form $(\a,\r)$, where $\a$ is a Lifshitz algebra and $\r \cong \so(d)$
is the rotational subalgebra, which are the subject of
Section~\ref{sec:d+2-lifshitz-spaces}.  Secondly, we have the
$(d+1)$-dimensional Lifshitz--Weyl spacetimes whose Klein pairs are
now $(\a,\h)$, where $\a$ is again a Lifshitz algebra, but now
$\h \cong \co(d) = \so(d) \oplus \RR$ is spanned by the rotations and
one one of the scalars in the span of $D,H$.  This is the subject of
Section~\ref{sec:d+1-lifshitz-spaces}.

\subsection{Homogeneous Lifshitz spacetimes}
\label{sec:d+2-lifshitz-spaces}

We now list the possible Klein pairs $(\a,\r)$ where $\a$ is one of
the Lie algebras of Table~\ref{tab:lifshitz} and $\r \cong \so(d)$
is the subalgebra spanned by $J_{ab}$.  Clearly these are indexed by
the Lie algebras $\a$ in Table~\ref{tab:lifshitz} themselves.  We
may easily identify the corresponding homogeneous spaces, partially
from the classification in
\cite[Appendix~A]{Figueroa-OFarrill:2018ilb}.  The results are
tabulated in Table~\ref{tab:d+2-lifshitz-spaces-sum}.  The notation is such
that $\HH^d$ is $d$-dimensional hyperbolic space, $\SS^d$ is the
$d$-dimensional sphere, $\EE^d$ is the $d$-dimensional euclidean
space, $\zS$ is the static aristotelian spacetime, $\zTS$ the
torsional static aristotelian spacetime, $\NN$ is the simply-connected
three-dimensional Heisenberg group (a three-dimensional aristotelian
spacetime prosaically labelled $A24$ in
\cite{Figueroa-OFarrill:2018ilb}) and $\GG$ is the simply-connected
two-dimensional Lie group whose Lie algebra is $[D,H] = H$.  For
$d=1$, since $\r=0$, the Klein pairs are simply the Lie algebras
themselves, which are the Bianchi Lie algebras.  The spacetimes are
the simply-connected three-dimensional Lie groups, as in Bianchi's
original paper \cite{Bianchi}.

The simply-connected four-dimensional homogeneous Lifshitz spacetime
with Klein pair $(\a_7,\r)$ may be identified with the
simply-connected Lie group $\Ggr$ whose Lie algebra $\g$ is an
extension-by-derivation of the three-dimensional Heisenberg Lie
algebra $\n$, so fitting into an exact sequence
\begin{equation}
  \begin{tikzcd}
    0 \arrow[r] & \n \arrow[r] & \g \arrow[r] & \RR D \arrow[r] & 0.
  \end{tikzcd}
\end{equation}
The group $\Ggr$ is foliated by copies of the Heisenberg group $\NN$
and fibres over the real line.

\subsection{Homogeneous Lifshitz--Weyl and aristotelian spacetimes with scalar charge}
\label{sec:d+1-lifshitz-spaces}

We now list the possible Klein pairs $(\a,\h)$ where $\a$ is one of
the Lie algebras of Table~\ref{tab:lifshitz} and
$\h = \r \oplus \RR S$ is the subalgebra spanned by $J_{ab}$ and a
scalar $S$ in the span of $D$ and $H$. These Klein pairs describe
homogeneous Lifshitz--Weyl spacetimes or aristotelian spacetimes with
one scalar charge. For each Lie algebra $\a$ in
Table~\ref{tab:lifshitz}, we determine the possible one-dimensional
scalar lines (spanned by $S$) up to the action of $\J$-preserving
automorphisms of $\a$; that is, automorphisms of $\a$ which are the
identity on $\r$. For $d=1$, the Klein pairs were already classified
(from the point of view of kinematical spacetimes) in
\cite[Section~3.4]{Figueroa-OFarrill:2018ilb} and further studied in
\cite{Figueroa-OFarrill:2019sex}.

\subsubsection{Klein pairs associated to $\a_1$}
\label{sec:klein-pairs-a1}

The $\J$-preserving automorphisms are given by
\begin{equation}
  \label{eq:autos-a1}
  \P \mapsto \mu \P, \qquad H \mapsto a H + b D \qquad\text{and}\qquad
  D \mapsto c H + d D,
\end{equation}
where $\mu \neq 0$ and $\begin{pmatrix} a & b \\ c & d\end{pmatrix}$
is invertible. Clearly, we can take any $S$ to $D$ via an
automorphism.  The resulting Klein pair is not effective, since $D$
spans an ideal both of $\a$ and of $\h$.  Quotienting by this ideal,
we obtain a Klein pair $(\s,\r)$ where $\s$ is the aristotelian static
Lie algebra and $\r$ is the rotational subalgebra.  As shown in
\cite{Figueroa-OFarrill:2018ilb}, this is the Klein pair of the static
aristotelian spacetime $\zS$.

\subsubsection{Klein pairs associated to $\a_2$}
\label{sec:klein-pairs-a2}

The $\J$-preserving automorphisms are given by
\begin{equation}
  \label{eq:autos-a2}
  \P \mapsto \mu \P, \qquad H \mapsto a H \qquad\text{and}\qquad
  D \mapsto D + c H,
\end{equation}
where $\mu, a \neq 0$ and $c \in \RR$. Suppose that $S = \alpha H +
\beta D$.  Then under such an automorphism, $S \mapsto (a \alpha +
\beta c) H + \beta D$.  If $\beta \neq 0$, then we can choose $c$ so
that $S = \beta D$ and if $\beta = 0$, then $S = \alpha H$.  In other
words, we can take $\h = \r \oplus \RR D$ or $\h = \r \oplus \RR H$.
The latter Klein pair is not effective, since $H$ spans an ideal of
both $\a$ and $\h$.  Quotienting by this ideal gives the Klein pair of
the aristotelian static spacetime $\zS$.  The former Klein
pair is effective and describes a homogeneous flat Lifshitz spacetime
with $z=\infty$.

\subsubsection{Klein pairs associated to $\a_3^z$}
\label{sec:klein-pairs-a3z}

Here the $\J$-preserving automorphisms are given by
\begin{equation}
  \label{eq:autos-a3z}
  \P \mapsto \mu \P, \qquad H \mapsto a H \qquad\text{and}\qquad
  D \mapsto D + c H,
\end{equation}
where $\mu, a \neq 0$ and $c \in \RR$, and as before we have two
choices of scalar lines up to automorphisms: $\h = \r \oplus \RR D$ or
$\h = \r \oplus \RR H$.  The latter Klein pair is not effective, since
$H$ spans an ideal of both $\a$ and $\h$.  Quotienting by this ideal
gives the Klein pair of the aristotelian torsional static spacetime
$\zTS$.  The former Klein pair is effective and describes a
homogeneous flat Lifshitz--Weyl spacetime with scaling exponent $z\neq
0$.

\subsubsection{Klein pairs associated to $\a_4^\pm$}
\label{sec:klein-pairs-a4pm}

The $\J$-preserving automorphisms are given by
\begin{equation}
  \label{eq:autos-a4pm}
  \P \mapsto \pm \P, \qquad H \mapsto a H + b D \qquad\text{and}\qquad
  D \mapsto c H + d D,
\end{equation}
where $\begin{pmatrix} a & b \\ c & d\end{pmatrix}$ is invertible.
We may take $S = D$ without loss of generality, resulting in a
non-effective Klein pair describing $\HH^d \times \RR$ or $\SS^d
\times \RR$.

\subsubsection{Klein pairs associated to $\a_5^\pm$}
\label{sec:klein-pairs-a5}

The $\J$-preserving automorphisms are given by
\begin{equation}
  \label{eq:autos-a5}
  \P \mapsto \pm \P, \qquad H \mapsto a H \qquad\text{and}\qquad
  D \mapsto D + c H,
\end{equation}
where $a \neq 0$ and $c \in \RR$.  There are two possibilities, namely
$S = H$ and $S=D$.  If $S=H$, the resulting Klein pair is
non-effective and quotienting by the ideal generated by $H$ gives
either $\HH^d \times \RR$ or $\SS^d \times \RR$.  If we take $S = D$
then we get an effective Klein pair corresponding to a generalised
Lifshitz spacetime with $z = \infty$, but this time with curvature.

\subsubsection{Klein pairs associated to $\a_6$}
\label{sec:klein-pairs-a6}

The $\J$-preserving automorphisms are given by
\begin{equation}
  \label{eq:autos-a6}
  \P \mapsto \lambda \P, \qquad H \mapsto \lambda^2 H
  \qquad\text{and}\qquad D \mapsto a D + b H,
\end{equation}
with $a,\lambda$ nonzero.  As before, there are two possibilities:
$S=D$ and $S=H$.  Neither case is effective, quotienting by the ideal
generated by $D$ we obtain the Heisenberg group $\NN$ as an
aristotelian spacetime, whereas quotienting by the ideal generated by
$H$ we obtain the static aristotelian spacetime $\zS$.

\subsubsection{Klein pairs associated to $\a_7$}
\label{sec:klein-pairs-a7}

The $\J$-preserving automorphisms are given by
\begin{equation}
  \label{eq:autos-a7}
  \P \mapsto \lambda \P, \qquad H \mapsto c H \qquad\text{and}\qquad D \mapsto a D + b H,
\end{equation}
where $a,c,\lambda$ are nonzero.  As in the previous case, there are
two possibilities: $S = H$ and $S=D$.  If we take $S=H$, the resulting
Klein pair is not non-effective and quotienting by the ideal generated
by $H$ recovers the torsional static aristotelian spacetime $\zTS$.
Taking $S=D$ we get an effective Klein pair describing a
Lifshitz--Weyl geometry with $z = 2$ and nonzero curvature.

\subsubsection{Summary}
\label{sec:summary-spaces}

We summarise this discussion in
Tables~\ref{tab:d+1-lifshitz-spaces-eff} and \ref{tab:d+1-exot}, which
list the homogeneous Lifshitz--Weyl and aristotelian spacetimes with
scalar charge, respectively. We list the Klein pairs $(\a,\h)$ and we
have changed basis so that the scalar $S$ in $\h$ is denoted $D$ in
the case it acts effectively or $Q$ in case it does not.

The Lifshitz--Weyl spacetimes are all both reductive and symmetric.
Being symmetric homogeneous spaces, they have a canonical torsion-free
invariant connection, which is flat in the first two cases ($\zLW1$
and $\zLW2_z$) and not flat in the next two ($\zLW3_\pm$ and $\zLW4$).

\subsection{Geometric interpretation of the limits}
\label{sec:interpr-limits}

Having understood the nature of the homogeneous Lifshitz spacetimes,
we may now give a geometric interpretation of some of the Lie algebra
contractions in Figure~\ref{fig:limits-aristo-lifshitz}.
The contractions $\a_5^\pm \to \a_2$ are flat limits in that the
round metric on $\SS^d$ and the hyperbolic metric on $\HH^d$ flatten
to become the euclidean metric on $\EE^d$.  This interpretation also
holds for the $(d+1)$-dimensional homogeneous aristotelian spacetimes
with scalar charge.  For the $(d+1)$-dimensional Lifshitz--Weyl
spacetimes, the interpretation is slightly different.  In this case
the flat limit refers to the flatness of the canonical torsion-free
invariant connection.

The contractions $\a_5^\pm \to \a_4^\pm$ and $\a_2 \to \a_1$ are such
that the nonabelian Lie group $\GG$ becomes abelian.  They may be
understood geometrically as an aristotelian limit of the Lifshitz
geometries: essentially in this limit the action of the scale
transformations becomes trivial.  A similar interpretation can be
given to the contraction $\a_7 \to \a_6$, where the four-dimensional
Lie group $\Ggr$, which is a semidirect product $\RR^+ \ltimes \NN$
becomes a direct product $\RR^+ \times \NN$ again via the
trivialisation of the scale transformations.

\section{Geometrical properties of the spacetimes}
\label{sec:geom-prop-spac}

\subsection{Invariants of Lifshitz spacetimes}
\label{sec:invar-lifsh-spac}

The Lifshitz spacetimes in Table~\ref{tab:d+2-lifshitz-spaces-sum} are all
homogeneous spaces of the form $G/H$ where $H\cong \SO(d)$.  Weyl
\cite[Theorem~2.11.A]{MR1488158} proved that the primitive tensor invariants
of $\SO(d)$, out of which any other invariant tensor can be written,
are the Kronecker $\delta_{ab}$ and the Levi-Civita
$\epsilon_{ab\cdots c}$.  Therefore all homogeneous Lifshitz
spacetimes in Table~\ref{tab:d+2-lifshitz-spaces-sum} share the same
invariant tensors.  In low rank, they are given by vector fields $H$
and $D$, their dual one-forms $\eta,\delta$, as well as $\pi^2 =
\delta_{ab}\pi^a \pi^b$ and $P^2 = \delta^{ab} P_a P_b$.  In addition
we have the corresponding volume forms.  Notice that each of these
spaces admits invariant metrics of signatures $(d+2,0)$, $(d+1,1)$ and
$(d,2)$; although perhaps it is the lorentzian case which is the most
relevant in the present context.  Even in this case, there is of
course a choice: e.g., $\pi^2 + 2 \eta \delta$ and $\pi^2 + \delta^2 -
\eta^2$ give rise to different invariant lorentzian metrics.

We will now show that the Lifshitz metric~\eqref{eq:lifshitz} is
indeed one of the invariant metrics of the homogeneous Lifshitz
spacetime~$\zL3_{z}$. Parametrising the group element as
$\sigma(t,\x,\rho)=e^{t H + \x \cdot \P} e^{\rho D}$ we can calculate the
pull back of the (left-invariant) Maurer-Cartan form $\vartheta$ (for
the details of this computation see, e.g., \cite[Section
3.6.]{Figueroa-OFarrill:2019sex})
\begin{equation}
\sigma^{*} \vartheta = e^{-z \rho} dt H + e^{-\rho} d \x \cdot \P +   d \rho D = \theta \, .
\end{equation}
Since the canonical invariant connection vanishes it is also
equivalent to the soldering form $\theta$. We can now use the
soldering form to map the invariant tensors to the tangent space of
the manifold, e.g., $\theta(\eta)=e^{-z \rho} dt$,
$\theta(\delta)=d \rho$, and $\theta(\pi^{a}) = e^{-\rho} d x^{a}$. We
can now write $- \eta^2 + \delta^2 + \pi^2 $ in coordinates
\begin{equation}
  - e^{-2z \rho} dt^{2} + d\rho^{2} + e^{-2 \rho} d \x \cdot d \x \, , 
\end{equation}
which with the change of coordinates $r=e^{\rho}$ leads us to the
Lifshitz metric \eqref{eq:lifshitz}.

\subsection{Invariants of Lifshitz--Weyl spacetimes}
\label{sec:invar-lifsh-weyl}

For the $(d+1)$-dimensional Lifshitz--Weyl spacetimes in
Table~\ref{tab:d+1-lifshitz-spaces-eff} -- that is, those
$(d+1)$-dimensional spaces corresponding to effective Klein pairs --
the natural invariants are not tensors but what we could term
conformal classes of tensors. These Lifshitz--Weyl spacetimes are
quotients of the Lifshitz spacetimes in
Table~\ref{tab:d+2-lifshitz-spaces-sum} by the one-parameter subgroup
generated (in the notation of Table~\ref{tab:d+1-lifshitz-spaces-eff}) by
$D$. As homogeneous spacetimes, the Lifshitz--Weyl spacetimes are
diffeomorphic to $G/H$, where
$H \cong \CO(d)\cong \SO(d) \times \RR^+$ is the $d$-dimensional
similitude group. In particular, they are base manifolds for a
principal $H$-bundle for which all the tensor bundles are associated
bundles. For example, the tangent bundle of a Lifshitz--Weyl spacetime
$M$ is associated to the reducible representation of $\CO(d)$ given by
$(V \otimes L^\lambda) \oplus (S \otimes L^\mu)$, where $V$ and $S$
are the vector and scalar representations of $\SO(d)$, respectively,
and $L^w$ is the one-dimensional representation of the subgroup of
dilatations of weight $w$. Here, $\lambda$ and $\mu$ are the
$D$-weights of $P_a$ and $H$, respectively. The natural invariant
tensors on $M$ are then rotational invariants which transform
according to some weight. The result of Weyl \cite{MR1488158} quoted
above says that the rotational invariant tensors of low rank are the
vector $H$, the dual one-form $\eta$, the symmetric rank-2 covariant
tensor $\pi^2$ and the symmetric bivector $P^2$. In
Table~\ref{tab:d+1-invariants} we tabulate the ``conformal'' weights
of these low-rank invariant tensors for the Lifshitz--Weyl spacetimes.
It follows from this table that the only such spacetime with a
conformal structure (either lorentzian $\pi^2 - \eta^2$ or riemannian
$\pi^2 + \eta^2$) is $\zLW2_{z=1}$, corresponding to conformally
compactified Minkowski spacetime, since $\zL2_{z=1}$ is the Poincaré
patch of $\zAdS$ and $\zLW_{z=1}$ is the quotient by the dilatations,
which as mentioned in the introduction is diffeomorphic to the
conformal boundary.

\begin{table}[h!]
\setlength{\tabcolsep}{10pt}
  \centering
  \caption{Conformal weights of low-rank invariants of homogeneous Lifshitz--Weyl spacetimes}
  \label{tab:d+1-invariants}
  \setlength{\extrarowheight}{2pt}
  \rowcolors{2}{blue!10}{white}
  \begin{tabular}{>{$}l<{$}|*{4}{>{$}r<{$}}}
    \toprule
    & \multicolumn{4}{c}{Weights} \\
    \multicolumn{1}{c|}{$\zLW$\#}& \multicolumn{1}{c}{$H$}  & \multicolumn{1}{c}{$\eta$} & \multicolumn{1}{c}{$P^2$} & \multicolumn{1}{c}{$\pi^2$} \\
    \midrule
    1 & 1 & -1 & 0 & 0 \\
    2_z & z & -z & 2 & -2 \\
    3_\pm & 1 & -1 & 0 & 0 \\
    4 & 2 & -2 & 2 & -2\\
    \bottomrule
  \end{tabular}
\end{table}

\subsection{Invariants of the aristotelian spacetimes}
\label{sec:invar-arist-spac}

Since the action of the charge $Q$ on the geometry is not effective
the invariants are the same as for the aristotelian spacetimes without
the scalar charge. In particular they admit all the aforementioned
invariants $H$, $\eta$, $P^{2}$ and $\pi^{2}$ as true invariants, not
just as conformal ones.

\section{Lifshitz particles}
\label{sec:coadj-orbits}

We now shift attention to another class of homogeneous manifolds of
Lifshitz Lie groups, the Lie groups of the Lifshitz Lie algebras in
Table~\ref{tab:lifshitz}.  They are not to be interpreted as
spacetimes, but as elementary systems (loosely, particles) with
Lifshitz symmetry; that is, symplectic manifolds admitting a
transitive action of a Lifshitz group via symplectomorphisms.  Let us
briefly review the relationship between homogeneous symplectic
manifolds and coadjoint orbits of certain central extensions.  A more
detailed description motivated by the present paper can be found in
\cite{BFOSymplectic}.

\subsection{Coadjoint orbits}
\label{sec:coadjoint-orbits}

Let $G$ be a connected Lie group acting transitively on a
simply-connected symplectic manifold $(M,\omega)$ via
symplectomorphisms.  If we let $g \in G$ also denote the
diffeomorphism of $M$ induced by $g$, then this condition is simply
$g^*\omega = \omega$.  As shown by Souriau \cite{MR1461545},
associated to such data there is a moment map $\mu : M \to \g^*$, with
$\g$ the Lie algebra of $G$, defined up to the addition of a constant
element of $\g^*$ and such that it is $G$-equivariant: intertwining between
the $G$-action on $M$ and an affinisation of the coadjoint action of
$G$ on $\g^*$; that is, for all $g \in G$ and $p \in M$,
\begin{equation}
  \mu(g \cdot p) = \Ad_g^* \mu(p) + \theta(g),
\end{equation}
where $\theta : G \to \g^*$ is a symplectic group cocycle.  In other
words, $\theta$ obeys the cocycle condition
\begin{equation}
  \theta(g_1g_2) = \theta(g_1) + \Ad_{g_1}^* \theta(g_2)
\end{equation}
and its derivative $d_e\theta : \g \to \g^*$ at the identity is such
that
\begin{equation}
  \left<(d_e\theta)(X),Y\right> = - \left<(d_e\theta)(Y),X\right>.
\end{equation}
If $\theta$ were a coboundary, so that $\theta(g) = \mu_0 - \Ad_g^*
\mu_0$ for some constant $\mu_0 \in \g^*$, then we could redefine the
moment map: $\mu \mapsto \mu' = \mu-\mu_0$, so that now
\begin{equation}
  \mu'(g \cdot p ) = \Ad_g^* \mu'(p)
\end{equation}
is equivariant relative to the (linear) coadjoint action.  If $\theta$
is cohomologically nontrivial, then it defines a one-dimensional
central extension $\widehat G$ of $G$
(see \cite[Theorem~15]{BFOSymplectic}) and the affine action of $G$ on
$\g^*$ is now essentially the coadjoint representation of $\widehat G$
on $\widehat\g^*$.  It then follows that $M$ is the universal cover of
a coadjoint orbit of $\widehat G$.

One-dimensional central extensions\footnote{Strictly speaking with
  kernel $\cong \RR$, but the topology of the kernel is of no
  consequence.} of $G$ are classified by the smooth group cohomology
group $H^2(G)$.  The celebrated van~Est theorem
\cite{MR0059285,MR0070959} (see also \cite{MR4081118}) implies that
$H^2(G)$ is isomorphic to the relative Lie algebra cohomology
$H^2(\g,\k)$, where $\k$ is the Lie algebra of a maximal compact
subgroup of $G$.  Hence to determine (up to coverings) the homogeneous
symplectic manifolds of $G$ we need to determine the coadjoint orbits
of every one-dimensional central extension of $G$ whose van~Est
derivative defines a class in $H^2(\g,\k)$.

As an illustration of how coadjoint orbits arise from particle motion,
let us consider briefly geodesic motion in the Lifshitz spacetime
$(M^{d+2},g)$ where $g$ is the metric tensor in
equation~\eqref{eq:lifshitz}.  For generic values of $z$ (here $z \neq
0,1$), $g$ admits Killing vector fields given by
equation~\eqref{eq:lifshitz-kvfs} generating an action of the Lifshitz
group $G$ with Lie algebra $\g = \a_3^z$ on $M$ by isometries.  Let $\gamma : I
\to M$ be an affinely-parametrised geodesic of the Levi-Civita
connection defined by $g$.  Every such geodesic $\gamma$ defines an
element $\alpha_\gamma \in \g^*$; that is, a linear map $\alpha_\gamma : \g
\to \RR$ given by $X \mapsto g(\dot\gamma,\xi_X)$, which is a constant
along the geodesic.  If we let $a \in G$, let $\phi_a : M \to M$
denote the corresponding diffeomorphism.  Then $\phi_a \circ \gamma : I \to M$ is also an
affinely parametrised geodesic and it is not hard to show that
$\alpha_{\phi_a \circ \gamma} = \Ad_a^* \alpha_\gamma$.  Therefore a
geodesic $\gamma$ defines a map $G \to \g^*$, sending $a \in G$ to
$\alpha_{\phi_a \circ \gamma} = \Ad^*_a \alpha_\gamma$, which is none other
but the coadjoint orbit map.  The image of this map is precisely the
coadjoint orbit of $\alpha_\gamma$.

Let us observe that this assignment from particle trajectories in a
spacetime to coadjoint orbits is not bijective, in that different
spacetimes might give rise to the same coadjoint orbit.  This simply
reflects the fact that coadjoint orbits are an intrinsic property of
the Lie group and not a property of the spacetime.

In this section we present some partial results on the calculation of
coadjoint orbits of some Lifshitz Lie groups.  We divide the
discussion into parts labelled by the Lie algebras in
Table~\ref{tab:lifshitz}.

\subsection{$\a_1$, $\a_2$ and $\a_3^z$}
\label{sec:a_1-a_2-a_3z}

As in Section~\ref{sec:centr-extens-a_1}, we treat them together by
introducing a new parameter $w$ and declaring $[D,P_a] = w
P_a$. Rescaling $D$, we see that $(w,z)$ are defined only up to
multiplication by a nonzero real number. The choice $(z,w)=(0,0)$
gives $\a_1$, whereas $\a_2$ corresponds to $(z,w)=(1,0)$ and $\a_3^z$
corresponds to $(z,w)=(z,1)$.

\subsubsection{Adjoint and coadjoint actions}
\label{sec:adjo-coadj-acti}

Let $\g$ be the Lie algebra spanned by $J_{ab},P_a,H,D$ subject to the
brackets \eqref{eq:lifshitz-common} together with
\begin{equation}
  \label{eq:g-brackets}
    [P_a,P_b] = 0, \qquad [H, P_a] = 0, \qquad [D,P_a] = w P_a \qquad\text{and}\qquad  [D,H] = z H.
\end{equation}

When discussing coadjoint orbits associated to a Lie algebra $\g$ we
need to specify the Lie group $G$ under consideration.  One way is
to take $G$ to be the unique connected and simply-connected group with
Lie algebra $\g$.  This group typically does not act effectively on
$\g$ (and hence on its dual), but a certain quotient (known as the
adjoint group) does.  For example, for $\g = \su(n)$, one would take $G
= \SU(n)$ and the adjoint group is the quotient by the $\ZZ_n$
subgroup consisting of scalar matrices.  For the Lie algebra $\g$
under consideration, defined by the
brackets~\eqref{eq:lifshitz-common} and \eqref{eq:g-brackets}, we may
take $G$ to be the group with underlying manifold $\RR^+ \times \SO(d)
\times \RR^d \times \RR$ and multiplication given by
\begin{equation}
  \label{eq:G-mult}
  (\sigma_1, A_1, v_1, h_1) \cdot (\sigma_2, A_2, v_2, h_2) =
  (\sigma_1 \sigma_2, A_1 A_2, v_1 + \sigma_1^w A_1 v_2, h_1 +
  \sigma_1^z h_2).
\end{equation}
The group $G$ is not simply-connected: its universal cover would have
underlying manifold $\RR^+ \times \Spin(d) \times \RR^d \times \RR$
(if $d>2$) or $\RR^+ \times \RR^2 \times \RR^2 \times \RR$ (if $d=2$), but all the
representations of $\so(d)$ appearing in the Lie algebra 
are tensorial and hence factor through $\SO(d)$, so that the adjoint
group is $G$, at least when $z\neq 0$.  If $z=0$ then the adjoint
representation has nontrivial kernel and the adjoint group is the
quotient of $G$ by this kernel.

It follows from the multiplication law \eqref{eq:G-mult} that the
identity of $G$ is $(1,\id,0,0)$ and inversion is given by
\begin{equation}
  \label{eq:g-inverse}
  (\sigma, A, v, h)^{-1} = (\sigma^{-1}, A^{-1}, -\sigma^{-w} A^{-1} v, -\sigma^{-z} h).
\end{equation}
As a check, let us calculate the Lie algebra of $G$.  Consider a curve
$\gamma(t) = (\sigma(t), A(t), v(t), h(t))$ in $G$ with $\gamma(0) =
(1,\id,0,0)$ and $\gamma'(0) = (\lambda, X, p, \varepsilon) \in \RR \oplus
\so(d) \oplus \RR^d \oplus \RR$.

The adjoint action on $g \in G$ on $\gamma'(0) \in \g$ is given by the
velocity at the identity of the curve $g \gamma(t) g^{-1}$.  If we let
$g = (\sigma, A, v, h)$ and $\gamma'(0)=(\lambda, X,q,\theta)$ as
before, we find that
\begin{equation}
  \label{eq:adjoint-action}
  \Ad_{(\sigma, A, v, h)} (\lambda, X,q,\theta) = (\lambda, A X
  A^{-1}, \sigma^w A q - w \lambda v - A X A^{-1}v, \sigma^z
  \theta - z \lambda h).
\end{equation}
It follows from this expression that
\begin{equation}
  \label{eq:ker-ad}
  \ker\Ad =
  \begin{cases}
    \{(1,\id,0,0)\}, &  \text{if $z \neq 0$}\\
   \left\{ (1,\id,0,h) \middle | h \in \RR\right\}, & \text{if $z = 0$ and $w \neq 0$}\\
    \left\{ (\sigma,\id,0,h) \middle | \sigma \in \RR^+,\ h \in \RR\right\}, & \text{if $z=w=0$.}
  \end{cases}
\end{equation}

Replacing $(\sigma, A, v, h )$ by a curve
$(\sigma(s), A(s), v(s), h(s))$ in $G$ through the identity in
equation~\eqref{eq:adjoint-action}, we obtain a curve in $\g$ through
$(\lambda,X,q,\theta)$, whose velocity there is the bracket
\begin{equation}
  [(\sigma'(0), A'(0), v'(0), h'(0)), (\lambda, X, q, \theta)].
\end{equation}
Performing this calculation we obtain (after changing notation)
\begin{equation}
  \label{eq:g-Lie-algebra}
  [(\lambda_1, X_1, q_1, \theta_1), (\lambda_2, X_2, q_2,
  \theta_2)] = \left(0, [X_1,X_2], w(\lambda_1 q_2 -\lambda_2 q_1) +
  X_1 q_2 - X_2 q_1, z(\lambda_1 \theta_2 - \lambda_2 \theta_1)\right).
\end{equation}
Comparing with \eqref{eq:lifshitz-common} and \eqref{eq:g-brackets},
we see that
\begin{equation}
  \label{eq:g-LA-map}
  J_{ab} = (0, E_{ab}, 0,0), \qquad P_a =
  (0,0,e_a,0), \qquad H = (0,0,0,1) \qquad\text{and}\qquad D = (1, 0, 0, 0),
\end{equation}
where $e_a$ are the elementary vectors in $\RR^d$ and $E_{ab} \in \so(d)$ is the
skew-symmetric endomorphism defined by
\begin{equation}
  \label{eq:skew-endos}
  E_{ab} e_c = \delta_{bc} e_a - \delta_{ac} e_b.
\end{equation}

We may identify $\g^*$ with $\g$ as vector spaces, under the inner
product $\left<-,-\right> : \g \times \g \to \RR$ defined by
\begin{equation}
  \label{eq:g-dual-pair}
  \left< (\mu, Y, p, \varepsilon), (\lambda, X, q, \theta)\right> =
  \lambda \mu + \tfrac12 \tr(Y^T X) + p^T q + \theta\varepsilon.
\end{equation}
In this way, we can identify the canonical dual basis
$\lambda^{ab},\pi^a, \eta, \delta$ for $\g^*$ with
\begin{equation}
  \label{eq:g-dual-basis}
  \lambda^{ab} = (0, E_{ab}, 0,0), \qquad \pi^a = (0,0,e_a, 0), \qquad
  \eta = (0,0,0,1)\qquad\text{and}\qquad \delta = (1,0,0,0).
\end{equation}
This inner product is not invariant under the adjoint representation,
so that the adjoint and coadjoint representations are not equivalent.
The coadjoint action can be worked out using the above inner product:
\begin{equation}
  \label{eq:coadjoint-def}
  \bigl< \Ad^*_{(\sigma, A, v, h)}(\mu, Y, p, \varepsilon),
    (\lambda, X, q, \theta)\bigr> = \left<(\mu, Y, p, \varepsilon), \Ad_{(\sigma, A, v, h)^{-1}} (\lambda, X, q, \theta) \right>.
\end{equation}
Using the explicit expression \eqref{eq:adjoint-action} for the
adjoint action, we find that the coadjoint action is given by
\begin{equation}
  \label{eq:coadjoint-action}
  \Ad^*_{(\sigma, A, v, h)} (\mu, Y, p, \varepsilon) = (\mu', Y', p', \varepsilon'),
\end{equation}
where
\begin{equation}
  \label{eq:coadjoint-action-too}  
  \begin{split}
    \mu' &= \mu + w \sigma^{-w} v^T A p + z \sigma^{-z} h \varepsilon\\
    Y' &= A Y A^{-1} +  \sigma^{-w} (A p v^T - v (A p)^T)\\
    p' &= \sigma^{-w} A  p\\
    \varepsilon' &= \sigma^{-z} \varepsilon,
  \end{split}
\end{equation}
which differs from the adjoint action \eqref{eq:adjoint-action}, as
expected. This expression is to be interpreted as the action of the
group $G = \RR^+ \times \SO(d) \times \RR^d \times \RR$ (with the
group multiplication \eqref{eq:G-mult}) on the vector space
$\RR \oplus \wedge^2 \RR^d \oplus \RR^d \oplus \RR$. Notice,
parenthetically, that for nonzero $\varepsilon$ and $p$ the rational
function 
$\frac{\varepsilon^{2/z}}{(p^Tp)^{1/w}}$ is an invariant of the
coadjoint orbit. To interpret this invariant we restrict to $w=1$ and
rewrite it as $\varepsilon^{2} = v^{2}_{z} (p^{T}p)^{z}$ which we can
understand as a dispersion relation. For $z=1$ it indeed agrees with
the well known relation $\varepsilon^{2} = c^{2} \, p^{T}p$ of
massless Poincaré particles where $v_{1}$ is given by the speed of
light $c$. This also agrees with the Lifshitz particle presented in
\cite{QuimLifshitzParticle}. For vanishing $p$ the spin $\tr(Y^{T}Y)$
is an invariant and when additionally $\varepsilon$ is zero $\mu$ is
also invariant.

Special cases of the coadjoint action are
\begin{equation}
  \label{eq:special-coad}
  \begin{split}
   \Ad^*_{(\sigma, \id, 0,0)} (\mu, Y, p, \varepsilon) &= (\mu, Y, \sigma^{-w}p,\sigma^{-z}\varepsilon )\\
   \Ad^*_{(1, A, 0,0)} (\mu, Y, p, \varepsilon) &= (\mu, A Y A^{-1}, A p, \varepsilon)\\
   \Ad^*_{(1, \id, v,0)} (\mu, Y, p, \varepsilon) &= (\mu + w v^T p, Y + p v^T- v p^T, p, \varepsilon)\\
   \Ad^*_{(1, \id, 0,h)} (\mu, Y, p, \varepsilon) &= (\mu + z h \varepsilon, Y,
   p, \varepsilon).
  \end{split}
\end{equation}
The first two are as expected: $(\sigma, \id, 0,0)$ rescales $p$ and
$\varepsilon$, whereas $(1,A,0,0)$ acts like a rotation: conjugating
$Y$ and rotating the vector $p$.

\subsubsection{Structure of coadjoint orbits}
\label{sec:struct-coadj-orbits}

The group $G$ is actually a semidirect product $\CO(d) \ltimes T$,
where the abelian group $T \cong \RR^{d+1}$ and where the action of
$(\sigma,A) \in \CO(d)$ on $(v,h) \in T$ is given by
\begin{equation}
  (\sigma, A) \cdot (v,h) = (\sigma^w A v, \sigma^z h).
\end{equation}
Coadjoint orbits of such semidirect products have been studied, for
example, in the thesis of Oblak \cite{Oblak:2016eij} and the
references therein. Let us write $G = K \ltimes T$, with $K$ a
connected Lie group and $T$ abelian. We will let $\g = \k \oplus \t$
as a vector space and hence $\g^* = \k^* \oplus \t^*$, where we
identify $\k^*$ with the annihilator
$\t^o = \left\{\alpha \in \g^* \middle | \alpha(X) = 0~\forall X \in
  \t\right\}$ of $\t$ and, similarly, $\t^*$ with the annihilator
$\k^o$ of $\k$.  Let $\alpha \in \g^*$ and decompose it as
$\alpha = (\kappa,\tau) \in \k^* \oplus \t^*$.  Since $K$ acts on $T$
by automorphisms, it acts on its Lie algebra $\t$ and hence on the
dual $\t^*$.  Let $\eO_\tau \subset \t^*$ denote the $K$-orbit of
$\tau$ in $\t^*$.  There is a $K$-equivariant diffeomorphism
$\eO_\tau \cong K/K_\tau$, where
$K_\tau = \left\{ k \in K \middle | k \cdot \tau = \tau\right\}$ is
the stabiliser of $\tau$ in $K$. This exhibits $K$ as the total space
of a principal $K_\tau$ bundle $K \to \eO_\tau$ and given any manifold
$M$ on which $K_\tau$ acts, we may construct an associated fibre
bundle $K \times_{K_\tau} M \to \eO_\tau$, whose typical fibre is a
copy of $M$. For example, $K_\tau$ acts on $\k^*$ and, since
$\k_\tau \subset\k$ is a Lie subalgebra, this action preserves the
annihilator $\k_\tau^o$.  Since this is a linear representation of
$K_\tau$, the associated fibre bundle $K \times_{K_\tau} \k_\tau^o$ is
a vector bundle, and using the isomorphism
$\k_\tau^o \cong (\k/\k_\tau)^*$ can be seen to be the cotangent
bundle $T^*\eO_\tau$ of $\eO_\tau$.  Another example of associated
fibre bundles, this time not a vector bundle, is given by considering
a coadjoint orbit $\eO'$ of $K_\tau$ and constructing
$K \times_{K_\tau} \eO'$.  With these definitions behind us, we can
describe the coadjoint orbits of $G = K \ltimes T$.  The $G$-coadjoint
orbit of $\alpha = (\kappa, \tau)$ is the associated fibre bundle
$K \times_{K_\tau} (\k_\tau^o \times \eO_{\kappa_\tau}) \to \eO_\tau$,
where $\kappa_\tau \in \k_\tau^*$ is the restriction of $\kappa$ to
$\k_\tau$ and $\eO_{\kappa_\tau}$ is its $K_\tau$-coadjoint orbit.
The total space of the bundle
$K \times_{K_\tau} (\k_\tau^o \times \eO_{\kappa_\tau}) \to \eO_\tau$
is the fibred product of the cotangent bundle $T^*\eO_\tau$ and the
associated fibre bundle $K \times_{K_\tau} \eO_{\kappa_\tau}$ over
$\eO_\tau$.  As a check, notice that the dimension is $2 \dim \eO_\tau
+ \dim \eO_{\kappa_\tau}$, which is indeed even, since
$\eO_{\kappa_\tau}$ is a coadjoint orbit itself.

Two extremal cases are worth noting: if $\tau = 0$, then this simply
the $K$-coadjoint orbit of $\kappa$, whereas if $\kappa = 0$, this is
simply the cotangent bundle $T^*\eO_\tau$.

To determine the coadjoint orbits of our groups of interest
$G = \CO(d) \ltimes T$ we need to first decompose $\t^*$ into
$\CO(d)$-orbits and determine their stabilisers and then to determine
the coadjoint orbits of the stabilisers.  The action of $\CO(d)$ on
$\t^*$ can be read off from the last two entries in
equation~\eqref{eq:coadjoint-action} for the coadjoint action after
setting $v$ and $h$ to zero:
\begin{equation}
  \label{eq:co-action-on-t-star}
  (\sigma,A) \cdot (p, \varepsilon) = (\sigma^{-w} A p, \sigma^{-z} \varepsilon). 
\end{equation}

To continue, we must consider several cases depending on the values of
$(z,w)$.

\subsubsection{$\CO(d)$-orbits in $\t^*$ for $z=w=0$}
\label{sec:cod-orbits-w0z0}

If $z=w=0$ the action~\eqref{eq:co-action-on-t-star} reduces to
\begin{equation}
  (\sigma,A) \cdot (p, \varepsilon) = (A p, \varepsilon).
\end{equation}
There are two kinds of orbits:
\begin{itemize}
\item point-like orbits $\{(0,\varepsilon)\}$, with stabiliser $\CO(d)$; and
\item spherical orbits $S^{d-1}_{|q|} \times \{\varepsilon\}$ through
  $(p\neq 0,\varepsilon)$, with stabiliser
  $\CO(p^\perp) = \RR^+ \times \SO(p^\perp)
  \cong \CO(d-1)$.  Here the notation is that $S^{d-1}_{|p|}$ is the
  sphere of radius $|p|$, the euclidean norm of $p \in \RR^d$.
\end{itemize}

\subsubsection{$\CO(d)$-orbits in $\t^*$ for $w=0$ and $z \neq 0$}
\label{sec:cod-orbits-w0z}

We keep $w=0$ but now have $z \neq 0$, which can be set to
$z=1$ without loss of generality by rescaling $D$.  Then the
action~\eqref{eq:co-action-on-t-star} reduces to,
\begin{equation}
  (\sigma,A) \cdot (p, \varepsilon) = (A p, \sigma^{-1} \varepsilon).
\end{equation}
We have the following orbits, depending on $(p,\varepsilon)$:
\begin{itemize}
\item a point-like orbit $\{(0,0)\}$, with stabiliser $\CO(d)$;
\item a spherical orbit $S^{d-1}_{|p|} \times \{0\}$ through
  $(p\neq 0,0)$, with stabiliser $\CO(p^\perp)$;
\item two ray-like orbits $\{0\} \times \RR^\pm$ through $(0,\varepsilon)$
  with $\pm \varepsilon > 0$ and stabilisers $\SO(d)$;
\item two cylindrical orbits $S^{d-1}_{|p|} \times \RR^\pm$ through
  $(p\neq 0, \varepsilon)$, with $\pm \varepsilon > 0$, and stabiliser
  $\SO(p^\perp)$.
\end{itemize}

\subsubsection{$\CO(d)$-orbits in $t^*$ for $w\neq 0$ and $z =0$}
\label{sec:cod-orbits-w1z0}

Next we consider $z=0$ and $w\neq 0$.  Again we can set $w=1$ without
loss of generality, resulting in the action
\begin{equation}
  (\sigma,A) \cdot (p, \varepsilon) = (\sigma^{-1} A p, \varepsilon).
\end{equation}
There are two kinds of orbits:
\begin{itemize}
\item point-like orbits $\{(0,\theta)\}$, with stabiliser $\CO(d)$; and
\item orbits $\left( \RR^d\setminus\{0\}\right) \times \{\theta\}$,
  through $(p\neq 0,\varepsilon)$, with stabiliser $\SO(p^\perp)$.
\end{itemize}

\subsubsection{$\CO(d)$-orbits in $\t^*$ for $w\neq 0$ and $z \neq 0$}
\label{sec:cod-orbits-w1z}

Finally we have the case $w = 1$ (without loss of generality) and $z
\neq 0$, resulting in the action
\begin{equation}
  (\sigma,A) \cdot (p, \varepsilon) = (\sigma^{-1} A p, \sigma^{-z} \varepsilon).
\end{equation}
We have the following orbits:
\begin{itemize}
\item a point-like orbit $\{(0,0)\}$, with stabiliser $\CO(d)$;
\item two ray-like orbits $\{0\} \times \RR^\pm$ through $(0,\varepsilon)$
  with $\pm \varepsilon > 0$ and stabiliser $\SO(d)$;
\item an orbit $\left( \RR^d\setminus\{0\}\right) \times \{0\}$,
  through $(p\neq 0,0)$, with stabiliser $\SO(p^\perp)$; and
\item cylindrical orbits through $(p \neq 0, \varepsilon)$ with $\pm \varepsilon >
  0$, and stabiliser $\SO(p^\perp)$.  These orbits can be thought of
  as a sphere-bundle over the half-line, where the radius of the
  sphere varies with the point on the line.
\end{itemize}

\subsubsection{Summary}
\label{sec:summary-1}

In summary, the stabilisers are in all cases isomorphic to one of $\CO(d)$,
$\CO(d-1)$, $\SO(d)$ or $\SO(d-1)$.  We next determine the coadjoint
orbits of these groups, where we restrict ourselves to $d\geq 2$.  The
coadjoint action of $\CO(d)$ on $\co(d)^*$ can be read off from the
first two entries in equation~\eqref{eq:coadjoint-action} after
setting $v$ and $h$ to zero:
\begin{equation}
  \Ad^*_{(\sigma, A)} (\mu, Y) = (\mu, A Y A^{-1}).
\end{equation}
We therefore see that the coadjoint orbit through $(\mu,Y)$ is
$\{\mu\} \times \eO_Y$, where $\eO_Y$ is the coadjoint orbit of
$Y \in \so(d)^*$ under $\SO(d)$.  In other words, we are left with the
task of studying the coadjoint orbits of $\SO(n)$ for $n\geq 1$, since
$n=d$ or $n=d-1$ and $d \geq 2$.  Of course, the cases $n=1$ and $n=2$
have only point-like orbits: this is because $\so(1)= 0$ and $\so(2)$
is abelian.  The coadjoint orbits for $\so(3)$ are well known: we have
the origin of $\so(3)^*$ and then the spheres of radius equal to the
norm of $Y$ under the euclidean inner product induced by the Killing
form.  What about for $n>3$?  Being semisimple, the adjoint and
coadjoint representations are equivalent, and hence we may work with
the adjoint orbits.  These have been characterised in
\cite{arathoon2018hermitian}, where it is shown in Theorem~3.1 of that
paper, that coadjoint orbits of $SO(n)$ are hermitian flag manifolds
in $\RR^n$.  We describe some of them in
Appendix~\ref{sec:coadjoint-orbits-son}.

This describes all the ingredients required to determine, at least in
principle, the coadjoint orbits of the Lifshitz groups associated to
$\a_1$, $\a_2$ and $\a_3^z$.

\subsubsection{Coadjoint orbits from Lifshitz geodesics}
\label{sec:coadj-orbits-lifsh-4}

Let us consider the case $\g = \a_3^z$, thought of as Killing vector
fields \eqref{eq:lifshitz-kvfs} in the Lifshitz spacetime
$(M^{d+2},g)$ for the metric $g$ given in
equation~\eqref{eq:lifshitz}.\footnote{There are other
  Lifshitz-invariant metrics, but since this section is for the
  purpose of illustrating the method, we pick the metric which was
  already discussed above.}  Let $\gamma(s) = (t(s),r(s),x^a(s))$ be
an affinely parametrised geodesic.  As discussed in
Section~\ref{sec:coadjoint-orbits}, $\gamma$ defines an element
$\alpha_\gamma = (\Delta, \ell, k, E) \in \g^*$, where
\begin{equation}
  \begin{split}
    k_a &= g(\dot\gamma, \xi_{P_a}) = \tfrac{\dot x_a}{r^2}\\
    \ell_{ab} &= g(\dot\gamma, \xi_{J_{ab}}) = - x_a k_b + x_b k_a = \tfrac{-x_a \dot x_b + x_b \dot x_a}{r^2}\\
    E &= g(\dot\gamma, \xi_H) = - \tfrac{\dot t}{r^{2z}}\\
    \Delta &= g(\dot\gamma, \xi_D) = \tfrac{\dot r}{r} + z \tfrac{t\dot t}{r^{2z}} + \tfrac{x_a \dot x^a}{r^2} = \tfrac{\dot r}{r} - z t E + k_a x^a.
  \end{split}
\end{equation}
Under the coadjoint action $\Ad^*_{(\sigma,A,v,h)} (\Delta, \ell, k,
E) = (\Delta', \ell', k', E')$, where
\begin{equation}
  \begin{split}
    \Delta' &= \Delta + \sigma^{-1} v^T A k + z \sigma^{-z} E h\\
    \ell'_{ab} &= (A \ell A^{-1})_{ab} + \sigma^{-1} \left( (Ak)_a v_b - (Ak)_b v_a \right)\\
    k'_a &= \sigma^{-1} (Ak)_a\\
    E' &= \sigma^{-z} E.
  \end{split}
\end{equation}
Consider a geodesic with $\ell = 0$ and $\Delta = 0$ and with $E \neq
0$ and $k \neq 0$.  Then the corresponding coadjoint orbit has
dimension $2d$ and, from the discussion in
Section~\ref{sec:struct-coadj-orbits}, it is the cotangent bundle
$T^*\eO_\alpha$, where $\eO_\alpha$ is the orbit of $\alpha = k_a \pi^a
+ E \eta \in \g^*$ under the action of the subgroup $\CO(d)$ generated
by $J_{ab}$ and $D$.  This orbit is a generalised cylinder with
equation $E^2/|k|^{2z} = c$ for some constant $c>0$.

The stabiliser subgroup of $\alpha = k_a \pi^a + E \eta$ is isomorphic
to $\SO(d-1) \times \RR$, where $\SO(d-1)$ is the subgroup of $\SO(d)$
which fixes $k_a \pi^a$ and $\RR$ is the subgroup $\Gamma \subset G$
consisting of elements of the form
$(1,1, -\tfrac{z E h}{|k|^2} k, h )$ for $h \in \RR$.  The coadjoint
orbit of $\alpha$ is the base of a (trivial) principal $\Gamma$-bundle
whose total space is the evolution space in the sense of Souriau
\cite{MR1461545}.  It has a presymplectic structure (i.e., a closed
2-form) obtained by pulling back the Kirillov--Kostant--Souriau
symplectic form on the coadjoint orbit, whose kernel defines a
one-dimensional (integrable) distribution on the evolution space,
whose leaves are the particle trajectories.

It is possible to define a particle lagrangian for such trajectories
purely from the data defining the coadjoint orbit.  One might argue
that we already have a lagrangian, namely the one for geodesics.  Let
us make two remarks about this.  The first is that extremals of the
geodesic action principle are affinely parametrised geodesics
regardless of the causal type, whereas the action constructed from the
coadjoint orbit is tied to a causal type.  This is well-known from the
case of Minkowski geodesics, since the coadjoint orbits of lightlike
and timelike geodesics are different.  The second remark is that we do
not always have an invariant metric on a homogeneous space and hence
we do not necessarily have an action principle for geodesics, whereas
we can often construct an action principle from the coadjoint orbit.

Let us illustrate this method for the coadjoint orbit of $(0,0,k,E)$,
with $k\in\RR^d$ and $E\in\RR$ nonzero.  As we saw above the
stabiliser subalgebra of $(0,0,k,E)$ is isomorphic to $\so(d-1) \oplus
\RR Z$ where $\so(d-1)$ is the stabiliser of $k$ in the
vector representation of $\so(d)$ and $Z = (0,0, -\tfrac{z E}{|k|^2}
k, 1 ) \in \g$.  The corresponding evolution space is also a
homogeneous space of $G$ with stabiliser $\SO(d-1)$, the subgroup of
$\SO(d)$ which fixes $k$.  We may parametrise the evolution space
locally via the coset representative
\begin{equation}
  g = \underbrace{e^{t H} e^{x^a P_a}}_{g_0} e^{r D} e^{\theta^i R_i},
\end{equation}
where $R_i$, $i=1,\dots,d-1$ generate rotations which do not preserve
$k$ and where $g_0$ parametrises a point in the $(d+1)$-dimensional
Lifshitz--Weyl spacetime obtained from $M$ by quotienting by the
one-parameter group generated by $D$.  This is just for convenience in
the calculation of the pull-back of the left-invariant Maurer--Cartan
one-form, which gives
\begin{equation}
  g^{-1}dg = dr D  + e^{-zr} dt H + e^{-r} \left( \cos\|\theta\|
    dx^\parallel + \frac{\sin\|\theta\|}{\|\theta\|} \theta \cdot
    x^\perp \right) P^\parallel + \cdots
\end{equation}
where $x^\parallel = \frac{x \cdot k}{\|k\|^2}k$ is the component of
$x$ along $k$ (and similarly for $P^\parallel$),
$x^\perp = x - x^\parallel$ is the component perpendicular to $k$ and
$\|\theta\| = \delta_{ij}\theta^i\theta^j$.  In this expression we
have omitted any terms which are not invariant under $\so(d-1)$.  If
$\gamma(s)$ is a curve in the evolution space, it is described in
these coordinates by $(t(s),x^a(s),r(s),\theta^i(s))$ and the
lagrangian is the pull-back of a linear combination of the
$\so(d-1)$-invariant components of $g^{-1}dg$ to the $s$-interval
parametrising the curve.  Letting dots denote derivative with respect
to $s$, we have that the most general lagrangian is given by
\begin{equation}
  L = c_0 \dot r + c_1 e^{-zr} \dot t + c_2 e^{-r} \left(
    \cos\|\theta\| \dot x^\parallel +
    \frac{\sin\|\theta\|}{\|\theta\|} \theta \cdot \dot x^\perp\right),
\end{equation}
for some constants $c_0,c_1,c_2$.  We can ignore the first term, since
it is a total derivative, so without loss of generality we may set
$c_0 = 0$.

The canonical momenta are given by
\begin{equation}
  \begin{split}
    E &:= \frac{\partial L}{\partial \dot t} = c_1 e^{-zr}\\
    p^\parallel &:= \frac{\partial L}{\partial \dot x^\parallel} = e^{-r} c_2 \cos\|\theta\|\\
    p^\perp_i &:= \frac{\partial L}{\partial \dot x_i^\perp} = e^{-r} c_2 \frac{\sin\|\theta\|}{\|\theta\|} \theta_i.
  \end{split}
\end{equation}
The Euler--Lagrange equations say that $E, p^\parallel, p^\perp_i$ are
constant, so that if we assume that none of $z, c_1, c_2$ are zero, we
obtain that $r$ and $\theta_i$ are constant.  The momentum satisfies the
constraint
\begin{equation}
  p^2 = (p^\parallel)^2 + (p^\perp)^2 = c_2^2 e^{-2r}
\end{equation}
so that we recover the constraint that is satisfied by the coadjoint
orbit: namely, $E^2/\|p\|^{2z} = (c_1/c_2^z)^2$.  For prior work
related to dynamical realisations of the Lifshitz group see the
unpublished work of Gomis and Kamimura described in
\cite{QuimLifshitzParticle} and the more recent work of Galajinsky 
\cite{Galajinsky:2022bwq}.

\subsubsection{Central extensions}
\label{sec:central-extensions-1}

As shown in Table~\ref{tab:lifshitzcent}, for $d=2$ the Lie algebras
$\a_1$ and $\a_2$ admit central extensions which integrate to central
extensions of the corresponding Lie groups. Coadjoint orbits of these
central extensions give rise to homogeneous symplectic manifolds of
the corresponding two-dimensional Lifshitz Lie groups. We shall not
discuss them in this paper, but might it be interesting to study them
in the future.

The Lie algebras $\a_1$ and $\a_3^{z=0}$ do have a central extension
for all values of $d$.  In the case of $\a_1$, the central extension
is isomorphic to $\iso(d) \oplus \h_3$, where $\h_3$ is a Heisenberg
algebra.  Coadjoint orbits of a direct product of Lie groups are
products of coadjoint orbits: those of the Heisenberg group are discussed in
Appendix~\ref{sec:coadj-orbits-heis}, whereas those of the euclidean
group $\ISO(d) = \SO(d) \ltimes \RR^d$ can be obtained
via the method explained in Section~\ref{sec:struct-coadj-orbits}: they
boil down to the determination of the coadjoint orbits of $\SO(d-1)$,
being the stabiliser of a nonzero $p \in (\RR^d)^*$.  The central
extension of $\a_3^{z=0}$ is now isomorphic to a semidirect product
$(\so(d) \oplus \h_3) \ltimes \RR^d$, where  $\RR^d =
\left<P_a\right>$ is an abelian ideal.  The discussion in
Section~\ref{sec:struct-coadj-orbits} again applies and all the
situation is very similar to the one described above with $\SO(d)
\times H_3$ replacing $\CO(d)$. We do not discuss them further here,
but leave them for future work.

\subsection{$\a_4^\pm$ and $\a_5^\pm$}
\label{sec:a_4pm-a_5pm}

We discuss these two Lie algebras together by introducing a parameter
$z \in \{0,1\}$ and letting $[D,H]=z H$.  The Lie algebra $\a_4^\pm$
corresponds to $z=0$ and $\a_5^\pm$ to $z=1$.  These Lie algebras are
direct sums of the Lie algebras: the subalgebra spanned by
$J_{ab},P_a$ and the two-dimensional Lie algebra spanned by $D,H$. 
The former Lie algebra has brackets $[P_a,P_b]= \pm J_{ab}$ in
addition to those involving $J_{ab}$.  It is isomorphic to $\so(d,1)$
if the sign is $+$ and to $\so(d+1)$ if the sign is $-$.  A coadjoint
orbit of either of these two Lifshitz Lie groups is therefore a
product of a coadjoint orbit of $\SO(d,1)_0$ or $\SO(d+1)$, for
$d \geq 2$ and a coadjoint orbit of the two-dimensional Lie group
generated by $D$ and $H$.  They are described in the appendices.

The connected (and simply-connected) Lie group $G$ generated by $D$ and
$H$ is given by
\begin{equation}
  G = \left\{
    \begin{pmatrix}
      a & b \\ 0 & a^{1-z}
    \end{pmatrix}
\middle | a,b \in \RR,~ a> 0\right\}.
\end{equation}
Let us consider $\a_5^\pm$, so $z=1$.  As shown in
Section~\ref{sec:centr-extens-a_4}, this Lie algebra admits no central
extensions and hence any simply-connected homogeneous symplectic
manifold covers a coadjoint orbit.  These are determined in
Appendix~\ref{sec:coadjoint-orbits-G2}.

As shown in Section~\ref{sec:centr-extens-a_4}, the Lie algebra
$\a_4^\pm$ admits a one-dimensional central extension $[D,H]=Z$.  This
promotes the two-dimensional abelian group generated by $D,H$ to the
Heisenberg group, whose coadjoint orbits are described in
Appendix~\ref{sec:coadj-orbits-heis}.  The relevant coadjoint orbits
of the central extension are now products of coadjoint orbits of the
Heisenberg group with those of either $\SO(d+1)$ or $\SO(d,1)_0$.

\subsection{$\a_6$ and $\a_7$}
\label{sec:a_6-a_7}

As in Section~\ref{sec:centr-extens-a_6}, we introduce a parameter $w
\in \{0,1\}$ in order to treat both algebras simultaneously.  Let $\g$
(depending on $w$) be the Lie algebra under consideration.  The brackets are then
\begin{equation}
  [D,P_a] = w P_a, \qquad [D,H] = 2 w H, \qquad [J,P_a] =
  \epsilon_{ab} P_b \qquad\text{and}\qquad [P_a,P_b]=\epsilon_{ab} H.
\end{equation}
It has the structure of a semidirect product of the abelian subalgebra
spanned by $J$ and $D$ acting as derivations on the Heisenberg ideal
spanned by $P_a,H$.

The Lie algebra $\g = \k \ltimes \h_3$ is a semidirect product of
the abelian two-dimensional Lie algebra $\k = \left<J,D\right> \cong
\co(2)$ with the Heisenberg algebra: $\h_3 = \left<P_a, H\right>$.

\subsubsection{Group law and (co)adjoint actions}
\label{sec:group-law-coadjoint}

The first task is to explicitly write down the group law in $G = K
\ltimes H_3$.  Let $H_3$ be the Heisenberg group generated by $P_a,H$.
It is a unipotent matrix group diffeomorphic to $\RR^3$, given
explicitly by
\begin{equation}
  H_3 = \left\{\begin{pmatrix} 1 & a & c \\ 0 & 1 & b \\ 0 & 0 & 1 \end{pmatrix} \middle | a,b,c \in \RR \right\}.
\end{equation}
Every element of $H_3$ can be uniquely written as a product of matrix
exponentials:
\begin{equation}
  \begin{pmatrix} 1 & a & c \\ 0 & 1 & b \\ 0 & 0 & 1 \end{pmatrix} =
  \exp
  \begin{pmatrix}
    0 & 0 & c \\ 0 & 0 & 0 \\ 0 & 0 & 0
  \end{pmatrix}
  \exp
  \begin{pmatrix}
    0 & 0 & 0 \\ 0 & 0 & b \\ 0 & 0 & 0
  \end{pmatrix}
  \exp
  \begin{pmatrix}
    0 & a & 0 \\ 0 & 0 & 0 \\ 0 & 0 & 0
  \end{pmatrix} = \exp(c H) \exp(b P_2) \exp(a P_1),
\end{equation}
where
\begin{equation}
  P_1 = \begin{pmatrix}
    0 & 1 & 0 \\ 0 & 0 & 0 \\ 0 & 0 & 0
  \end{pmatrix}, \qquad
  P_2 = \begin{pmatrix}
    0 & 0 & 0 \\ 0 & 0 & 1 \\ 0 & 0 & 0
  \end{pmatrix} \qquad\text{and}\qquad
  H = \begin{pmatrix}
    0 & 0 & 1 \\ 0 & 0 & 0 \\ 0 & 0 & 0
  \end{pmatrix}.
\end{equation}
It is easy to write down the action of $K$ on the Lie algebra $\h_3$
spanned by $P_a, H$ by exponentiating $[J,P_a] = \epsilon_{ab} P_b$,
$[D,P_a] = w P_a$ and $[D,H]=2w H$.  If we write $P = P_1 +
i P_2$ and then the adjoint action of $J$ is simply multiplication by
$-i$.  Therefore we find that, of course, $\exp(\theta J)H=H$ and that
\begin{equation}
  \exp(\theta J) (P_1 + i P_2) = e^{-i \theta}(P_1 + i P_2) = (P_1 \cos\theta + P_2
  \sin\theta) + i (P_2 \cos\theta- P_1 \sin\theta),
\end{equation}
whereas
\begin{equation}
  \exp(\sigma D) P_a = e^{w \sigma} P_a \qquad\text{and}\qquad \exp(\sigma D) H =
  e^{2 w \sigma} H.
\end{equation}
The general element $k = \exp(\sigma D) \exp (\theta J) \in K$ acts on $\h_3$ as
\begin{equation}\label{eq:auto-phi}
  \begin{split}
    g \cdot P_1 &= e^{w \sigma} (P_1 \cos \theta + P_2 \sin \theta) \\
    g \cdot P_2 &= e^{w \sigma} (P_2 \cos \theta - P_1 \sin \theta) \\
    g \cdot H &= e^{2 w \sigma} H.
  \end{split}
\end{equation}
One checks that this is an automorphism of $\h_3$:
\begin{equation}
  [g \cdot X, g \cdot Y] = g \cdot [X,Y] \qquad \forall X,Y \in \h_2,
\end{equation}
so that it defines a map $\phi : K \to \Aut(\h_3)$ sending every
$g \in K$ to the automorphism $\phi_g$ of $\h_3$ defined by
$\phi_g X = g \cdot X$ in equation~\eqref{eq:auto-phi}. Fix $g \in
K$. Then $\phi_g: \h_3 \to \h_3$ is in particular a Lie algebra
homomorphism and thus, by the Lie correspondence, lifts to a unique
Lie group homomorphism $\Phi_g : H_3 \to H_3$. To work it out we argue
as follows. The graph $\gamma \subset \h_3 \oplus \h_3$ of $\phi_g$ is
a Lie subalgebra of $\h_3 \oplus \h_3$ (because $\phi_g$ is a
homomorphism) and hence it exponentiates there to a connected subgroup
$\Gamma \subset H_3 \times H_3$.  There are two cartesian projections
$H_3 \times H_3 \to H_3$: restricting the left projection to $\Gamma$
gives a covering $\Gamma \to H_3$ which can be shown in this case to
be an isomorphism, whereas restricting the right projection to
$\Gamma$ gives the desired $\Phi_g$.  In detail, the graph
$\gamma \subset \h_3 \oplus \h_3$ is spanned by
 \begin{equation}
   \begin{split}
     (P_1, g \cdot P_1) &= (P_1, e^{w\sigma} (P_1 \cos \theta + P_2 \sin \theta))\\
     (P_2, g \cdot P_2) &= (P_2, e^{w\sigma} (P_2 \cos \theta - P_1 \sin \theta))\\
     (H, g \cdot H) &= (H, e^{2 w\sigma} H).
   \end{split}
 \end{equation}
 We can write $\gamma$ explicitly as the span of the following three
 block-diagonal $6 \times 6$ matrices:
 \begin{equation}
   \begin{split}
     \widehat P_1 &:=
     \begin{pmatrix}
       0 & 1 & 0 & & & \\ 0 & 0 & 0 & & & \\ 0 & 0 & 0 & & & \\
       & & & 0 & e^{w\sigma} \cos \theta & 0 \\ & & & 0 & 0 & e^{w \sigma} \sin \theta \\ & & & 0 & 0 & 0
     \end{pmatrix}\\
     \widehat P_2 &:=
     \begin{pmatrix}
       0 & 0 & 0 & & & \\ 0 & 0 & 1 & & & \\ 0 & 0 & 0 & & & \\
       & & & 0 & - e^{w\sigma} \sin \theta & 0 \\ & & & 0 & 0 & e^{w \sigma} \cos \theta \\ & & & 0 & 0 & 0
     \end{pmatrix}\\
     \widehat H &:=
     \begin{pmatrix}
       0 & 0 & 1 & & & \\ 0 & 0 & 0 & & & \\ 0 & 0 & 0 & & & \\
       & & & 0 & 0 & e^{2 w \sigma} \\ & & & 0 & 0 & 0 \\  & & & 0 & 0 & 0
     \end{pmatrix}\\
   \end{split}
 \end{equation}
and they exponentiate to the subgroup of $\GL(6,\RR)$ consisting of
matrices of the form
\begin{multline}
  \exp(h \widehat H) \exp( p_2 \widehat P_2) \exp (p_1 \widehat P_1)\\
  = 
  \begin{pmatrix}
    1 & p_1 & h & & & \\  0 & 1 & p_2 & & & \\  0 & 0 & 1 & & & \\
    & & & 1 & e^{w\sigma} (p_1 \cos \theta - p_2 \sin \theta) & e^{2 w\sigma} (h - p_1 p_2
    \sin^2 \theta + \tfrac12 (p_1^2 - p_2^2) \sin \theta \cos \theta) \\
    & & & 0 & 1 & e^{w\sigma} (p_2 \cos \theta + p_1 \sin \theta) \\
    & & & 0 & 0 & 1
  \end{pmatrix}
\end{multline}
from where read off that the Lie group automorphism $\Phi_g : H_3 \to
H_3$ corresponds to
\begin{equation}
  \begin{pmatrix}
    1 & p_1 & h \\ 0 & 1 & p_2 \\ 0 & 0 & 1 
  \end{pmatrix} \mapsto
  \begin{pmatrix}
    1 & e^{w\sigma} (p_1 \cos \theta - p_2 \sin \theta) & e^{2 w\sigma} (h - p_1 p_2
    \sin^2 \theta + \tfrac12 (p_1^2 - p_2^2) \sin \theta \cos \theta) \\
    0 & 1 & e^{w\sigma} (p_2 \cos \theta + p_1 \sin \theta) \\
    0 & 0 & 1
  \end{pmatrix}.
\end{equation}
One checks that $\Phi : K \to \Aut(H_3)$ is indeed a Lie group
homomorphism, so that $\Phi_{g g'} = \Phi_g \Phi_{g'}$ for all $g,g'
\in K$.

With these results in hand, we can now write down the group law on
$G = K \ltimes H_3$. If
$(k_1, h_1), (k_2,h_2) \in G$, their product is given by
\begin{equation}
  (k_1, h_1) (k_2, h_2) = (k_1 k_2, h_1 (\Phi_{k_1}h_2)),
\end{equation}
from where we read off that the inverse of $(k,h) \in G$ is
given by
\begin{equation}
  (k,h)^{-1} = (k^{-1}, \Phi_{k^{-1}}h^{-1}).
\end{equation}
Notice that any automorphism commutes with inversion so that
$(\Phi_kh)^{-1} = \Phi_kh^{-1}$.  Explicitly, the inverse of the group
element with coordinates $(\theta,\sigma,p_1,p_2,h)$ is the group
element with
coordinates
\begin{equation}\small
  (-\theta, -\sigma, -e^{-w \sigma} (p_1 \cos \theta + p_2 \sin \theta), e^{-w
    \sigma}(p_1\sin\theta - p_2 \cos \theta), e^{-2 w \sigma} (p_1p_2\cos^2 \theta - h -
  \tfrac12 (p_1^2-p_2^2)\sin \theta \cos \theta)).
\end{equation}

We can now work out the adjoint representation of $G$.  If
$(k,h) \in G$ and $(k_2(t),h_2(t))$ is a curve in
$G$ with $k_2(0) = 1_{K}$, $h_2(0) = 1_{H_3}$, $k_2'(0) =
X \in \k$ and $h_2'(0)=Y \in \h_3$, then
\begin{equation}
  \begin{split}
    \widehat \Ad_{(k,h)} (X,Y) &= \left. \frac{d}{dt}\right|_{t=0} (k,h) (k_2(t),h_2(t))(k^{-1},\Phi_{k^{-1}} h^{-1})\\
    &= \left. \frac{d}{dt}\right|_{t=0}   (k k_2(t), h  (\Phi_{k}h_2(t))) (k^{-1},\Phi_{k^{-1}} h^{-1}) \\
    &= \left. \frac{d}{dt}\right|_{t=0}  (k k_2(t) k^{-1}, h  (\Phi_{k}h_2(t)) \Phi_{k k_2(t)}(\Phi_{k^{-1}}h^{-1}))\\
    &=  \left(\Ad^{K}_{k} X, \left. \frac{d}{dt}\right|_{t=0} \left( h (\Phi_{k}h_2(t)) \Phi_{k k_2(t)k^{-1}}h^{-1}\right) \right)\\
    &=  \left(X, \left. \frac{d}{dt}\right|_{t=0} \left( h (\Phi_{k}h_2(t)) \Phi_{k k_2(t)k^{-1}}h^{-1}\right) \right),
  \end{split}
\end{equation}
using that $K$ is abelian.

If we let $k = (\theta,\sigma) \in K$ and $X = (x,y) \in
\k$, and also
\begin{equation}\label{eq:hX-defs}
  h =
  \begin{pmatrix}
    1 & p_1 & h \\ 0 & 1 & p_2 \\ 0 & 0 & 1
  \end{pmatrix} \in H_3
  \qquad\text{and}\qquad
  Y =
  \begin{pmatrix}
    0 & u & s \\ 0 & 0 & v\\ 0 & 0 & 0
  \end{pmatrix} \in \h_3,
\end{equation}
then $\widehat\Ad_{(k,h)}(X,Y) = (X,Y')$, where
\begin{equation}
  Y' =
  \begin{pmatrix}
    0 & -w y p_1 + x p_2 + e^{w \sigma} (u \cos \theta - v \sin \theta) & \Xi\\
    0 & 0 & -w y p_2 - p_1 x + e^{w\sigma} (u \sin\theta + v \cos \theta)\\
    0 & 0 & 0
  \end{pmatrix},
\end{equation}
where
\begin{equation}
  \Xi = w y (p_1p_2 - 2 h) - \tfrac12 (p_1^2+p_2^2) x + e^{2 w \sigma} s +
  e^{w\sigma} ((u p_1 + v p_2) \sin \theta + (v p_1 - u p_2)\cos \theta).
\end{equation}

Relative to the basis $(J,D,P_1,P_2,H)$ for $\g$,
the matrix of $\widehat \Ad _{(k,h)}$ is given by
\begin{equation}
  \widehat\Ad_{(k,h)} =
  \begin{pmatrix}
    1 & 0 & 0 & 0 & 0\\
    0 & 1 & 0 & 0 & 0\\
    p_2 & -w p_1 & e^{w\sigma}\cos\theta & - e^{w\sigma} \sin\theta & 0 \\
    -p_1 & -w p_2 & e^{w\sigma} \sin\theta & e^{w\sigma} \cos\theta & 0 \\
    -\tfrac12(p_1^2+p_2^2) & w (p_1p_2 - 2h) & e^{w\sigma}(p_1 \sin\theta - p_2\cos\theta) & e^{w\sigma}(p_2 \sin\theta + p_1 \cos\theta) & e^{2w\sigma}
  \end{pmatrix}
\end{equation}
and hence relative to the canonical dual basis $(\lambda, \delta, \pi^1, \pi^2, \eta)$ for $\g^*$, the matrix of the
coadjoint action $\widehat\Ad^*_{(k,h)}$ is given by the inverse
transpose of the above matrix:
\begin{equation}
  \widehat\Ad^*_{(k,h)} =
  \begin{pmatrix}
    1 & 0 & e^{-w \sigma}(p_1 \sin\theta - p_2 \cos\theta) & e^{-w \sigma} (p_1 \cos\theta + p_2 \sin\theta) & -\tfrac12 e^{-2 w \sigma} (p_1^2 + p_2^2)\\
    0 & 1 & e^{-w \sigma} w (p_1 \cos\theta + p_2 \sin\theta) & e^{-w \sigma} w (p_2 \cos\theta - p_1 \sin\theta) & e^{-2 w \sigma} w (2h - p_1 p_2) \\
    0 & 0 & e^{-w \sigma} \cos\theta & -e^{-w \sigma} \sin\theta & e^{-2 w \sigma} p_2\\
    0 & 0 & e^{-w \sigma} \sin\theta & e^{-w \sigma} \cos\theta & -e^{-2 w \sigma} p_1\\
    0 & 0 & 0 & 0 &  e^{-2 w \sigma}
  \end{pmatrix},
\end{equation}
so that the coadjoint action by $(k,h)$ given in
equation~\eqref{eq:hX-defs} on the dual basis is explicitly:
\begin{equation}
  \begin{split}
    \lambda &\mapsto \lambda\\
    \delta &\mapsto \delta\\
    \pi^1 &\mapsto e^{-w\sigma}\left( \pi^1 \cos\theta + \pi^2 \sin\theta + (p_1
      \sin\theta - p_2 \cos\theta) \lambda + w (p_1 \cos\theta + p_2 \sin\theta) \delta\right)\\
    \pi^2 &\mapsto e^{-w\sigma}\left( -\pi^1 \sin\theta + \pi^2 \cos\theta + (p_1
      \cos\theta + p_2 \sin\theta) \lambda + w (p_2 \cos\theta - p_1 \sin\theta)
      \delta\right)\\
    \eta &\mapsto e^{-2w\sigma} \left(\eta -p_1 \pi^2 + p_2 \pi^1 - \tfrac12
    (p_1^2+p_2^2)\lambda + w (2h - p_1p_2) \delta\right).
  \end{split}
\end{equation}
Let us introduce coordinates $x_\lambda, x_\delta, x_{\pi^1},
x_{\pi^2}, x_\eta$ for $\g^*$.  Then the infinitesimal
generators of the coadjoint representation are the following vector
fields on $\g^*$:
\begin{equation}
  \begin{split}
    \xi_J &= - x_{\pi^2} \frac{\d}{\d x_{\pi^1}} + x_{\pi^1} \frac{\d}{\d x_{\pi^2}}\\
    \xi_D &= - w x_{\pi^1} \frac{\d}{\d x_{\pi^1}} - w x_{\pi^2} \frac{\d}{\d x_{\pi^2}} - 2  w x_{\eta} \frac{\d}{\d x_\eta}\\
    \xi_{P_1} &= x_{\pi^2} \frac{\d}{\d x_\lambda}   + w x_{\pi^1} \frac{\d}{\d x_\delta} - x_\eta \frac{\d}{\d x_{\pi^2}}\\
    \xi_{P_2} &= - x_{\pi^1} \frac{\d}{\d x_\lambda} + w x_{\pi^2} \frac{\d}{\d x_\delta} + x_\eta \frac{\d}{\d x_{\pi^1}}\\
    \xi_H &= 2 w x_\eta \frac{\d}{\d x_\delta}.
  \end{split}
\end{equation}
One can check that $\xi : \g \to \eX(\g^*)$ given by
$X \mapsto \xi_X$ is a Lie algebra antihomomorphism, as expected.

\subsubsection{Coadjoint orbits for $w=0$}
\label{sec:coadjoint-orbits-1}

Let us set $w=0$.  The coadjoint action on the coordinates $(x_\lambda, x_\delta, x_{\pi^1},x_{\pi^2}, x_\eta)$ is then given by
\begin{equation}
  \begin{split}
    x_\lambda & \mapsto x_\lambda + (p_1 \sin\theta - p_2 \cos\theta) x_{\pi^1} + (p_1 \cos\theta + p_2 \sin\theta ) x_{\pi^2} - \tfrac12 (p_1^2+p_2^2) x_\eta\\
    x_\delta & \mapsto x_\delta \\
    x_{\pi^1} &\mapsto \cos\theta x_{\pi^1} - \sin\theta x_{\pi^2} + p_2 x_\eta\\
    x_{\pi^2} &\mapsto \cos\theta x_{\pi^2} + \sin\theta x_{\pi^1} - p_1 x_\eta\\
    x_\eta &\mapsto x_\eta.
  \end{split}
\end{equation}
We must distinguish between two cases, depending on whether or not
$x_\eta$, which is inert, vanishes.
\begin{enumerate}
\item If $x_\eta = 0$, the orbit is either
  \begin{enumerate}
  \item a two-dimensional cylinder in the affine 3-plane defined by giving a
    constant value to $x_\delta$ and setting $x_\eta$ to
    zero, if at least one of $x_{\pi^1}$ and $x_{\pi^2}$ is nonzero; or
  \item a point with coordinates $(x_\lambda,x_\delta,0,0,0)$ if
    $x_{\pi^1} = x_{\pi^2} = 0$.
  \end{enumerate}
\item If $x_\eta \neq 0$, we have a two-dimensional surface in the
  affine 3-plane with constant $(x_\delta, x_\eta\neq 0)$,
  which is obtained as the graph $x_\lambda = f(x_{\pi^1},x_{\pi^2})$
  of a function in the ($x_{\pi^1},x_{\pi^2}$)-plane and hence
  diffeomorphic to $\RR^2$.
\end{enumerate}

\subsubsection{Coadjoint orbits for $w=1$}
\label{sec:coadjoint-orbits-2}

Now let's consider $w=1$.  The coadjoint action on the coordinates is
now
\begin{equation}
  \begin{split}
    x_\lambda &\mapsto x_\lambda + (p_1 \sin\theta - p_2 \cos\theta) e^{-\sigma} x_{\pi^1} + (p_1 \cos\theta + p_2 \sin\theta) e^{-\sigma} x_{\pi^2} - \tfrac12 (p_1^2 + p_2^2) e^{-2\sigma} x_\eta\\
    x_\delta &\mapsto x_\delta + (p_1 \cos\theta + p_2 \sin\theta ) e^{-\sigma} x_{\pi^1} + (p_2 \cos\theta - p_1 \sin\theta) e^{-\sigma} x_{\pi^2} +  (2h - p_1 p_2) e^{-2\sigma} x_\eta\\
    x_{\pi^1} &\mapsto e^{-\sigma} \cos\theta x_{\pi^1} - e^{-\sigma} \sin\theta x_{\pi^2} + p_2 e^{-2\sigma} x_\eta \\
    x_{\pi^1} &\mapsto e^{-\sigma} \sin\theta x_{\pi^1} + e^{-\sigma} \cos\theta x_{\pi^2} - p_1 e^{-2\sigma} x_\eta \\
    x_\eta & \mapsto e^{-2\sigma} x_\eta.
  \end{split}
\end{equation}
We must again distinguish between two cases:
\begin{enumerate}
  \item if $x_\eta = 0$, the orbits are either
  \begin{enumerate}
  \item points with
    coordinates $(x_\lambda,x_\delta, 0,0,0)$, if
    $(x_{\pi^1},x_{\pi^2})= (0,0)$; or
  \item if $(x_{\pi^1},x_{\pi^2}) \neq (0,0)$, the orbit is
    four-dimensional and consists of the $4$-plane $x_\eta =
    0$ with the $2$-plane with additional equations $x_{\pi^1} =
    x_{\pi^2} = 0$ removed.  So diffeomorphic to $\RR^4 \setminus
    \RR^2 \cong \RR^2 \times (\RR^2 \setminus \{(0,0)\})$.
  \end{enumerate}
  \item if $x_\eta \neq 0$, the orbit is a
    four-dimensional hypersurface given by the graph of a function of
    the four coordinates $(x_\lambda,x_\delta,x_{\pi^1}, x_{\pi^2})$.
\end{enumerate}

\section{Conclusion}
\label{sec:conclusion}

This work provides the first systematic classification of Lifshitz
algebras, spacetimes and particles. We also provide a full
classification of aristotelian spacetimes with scalar charge, in
particular ones with exotic spacetime symmetries. The Lifshitz
algebras are summarised in Table~\ref{tab:lifshitz} and the respective
spacetimes fall into three classes: $(d+2)$-dimensional Lifshitz
spacetimes (Table~\ref{tab:d+2-lifshitz-spaces-sum}) where the
dilatations provide an additional holographic direction and the
$(d+1)$-dimensional Lifshitz--Weyl spacetimes
(Table~\ref{tab:d+1-lifshitz-spaces-eff}) and $(d+1)$-dimensional
aristotelian spacetimes with a scalar charge
(Table~\ref{tab:d+1-exot}). It is interesting to note that $\a_{2}$,
$\a_{3}^{z\neq 0}$ and $\a_{5}^{\pm}$ give rise to each of the above
discussed spacetimes and can therefore be interpreted from various
different angles. We refer to Section~\ref{sec:summary-results} for a
summary of our results and end with a few remarks.
\begin{description}
\item[Beyond the standard Lifshitz symmetries] Our classification was
  motivated by the standard Lifshitz symmetries and it is reassuring
  that we indeed recover them as spacetimes $\zLW3_{z}$ and
  $\zLW2_{z}$. Beyond this case let us highlight the pairs based
  on $\a_{2}$ ($\zL2$ and $\zLW1$) and its curved
  generalisation $\a_{5}^{\pm}$ ($\zL5$ and $\zLW3_{\pm}$)
  both of which exist in generic dimension. It might well be that
  these spaces have played a rôle in the literature, if so we are
  unfortunately unaware of it.
\item[Exotic spacetime symmetries] Exotic spacetime symmetries,
  similar to the ones discussed in Section~\ref{sec:summary-results}
  have recently played a rôle in relation to fractons (see,
  e.g.,~\cite{Nandkishore:2018sel,Pretko:2020cko,Grosvenor:2021hkn}
  for reviews and
  \cite{Nicolis:2008in,Griffin:2013dfa,Hinterbichler:2014cwa,Griffin:2014bta}
  for earlier related work on polynomial shift symmetries). In the
  case of fractons the underlying geometry is also
  aristotelian~\cite{Bidussi:2021nmp,Jain:2021ibh} and there is also
  an action of the spacetime symmetry on the charges. There has been a
  systematic study of exotic symmetries of this
  type~\cite{Gromov:2018nbv}, however the symmetries we discuss here
  are a generalisation of the multipole
  algebras~\cite{Gromov:2018nbv}. We also allow for nontrivial
  commutation relations between the temporal, rather than just the
  spatial, translations and the underlying aristotelian geometry is
  not necessarily restricted to be flat. It is this generalisation
  that leads to the novel classification of exotic symmetries of
  Table~\ref{tab:d+1-exot}. Another consequence of our classification is
  that there are no aristotelian spacetimes with one scalar charge and
  nonzero $[Q,\P]$ commutator, like for the more conventional
  multipole algebras. With regard to~\cite{Gromov:2018nbv} we have
  kept the full rotational symmetry, which is an assumption that might
  be dropped and would lead to a more general classification.
\item[Exotic aristotelian theories with scalar charge] By showing that
  these symmetries exist and can be consistently realised we have
  provided the first nontrivial step for the construction of exotic
  aristotelian theories with scalar charge. However, to further
  clarify their physical relevance and the relation to fracton-like
  theories it would be interesting to construct field theories that
  realise these symmetries (e.g., following~\cite{Hirono:2022dci}).
\item[Generalisation] In this work the spacetimes have no boost
  symmetry so a natural generalisation is to add an additional vector
  to our classification. This leads for example to Bargmann spacetimes
  and will be discussed in a future
  work~\cite{Figueroa-OFarrill:20222xxx}.

  The case of aristotelian geometries with one vector charge and no
  additional scalar has already been worked out
  in~\cite{Figueroa-OFarrill:2018ilb}, the interpretation as exotic
  charges had however not been appreciated at that point.
\item[Unitary irreducible representations] Upon quantisation one
  expects that coadjoint orbits lead to unitary irreducible
  representations. In this sense our classification of the coadjoint
  orbits of the Lifshitz groups lays the foundation for such an
  endeavour and it might be interesting to understand if and in which
  sense these representations can be understood as quantum Lifshitz
  particles.
\end{description}

\section*{Acknowledgments}
\label{sec:acknowledgments}

We are grateful to Leo Bidussi, Joaquim Gomis, Kevin Grosvenor, Jelle
Hartong, Emil Have, Yuji Hirono, Kristan Jensen, Jørgen Musaeus,
Blagoje Oblak and Alfredo Pérez for useful discussions. We also thank
Anton Galajinsky for bringing \cite{Galajinsky:2022bwq} and Mohammad
Reza Mohammadi Mozaffar for bringing~\cite{MohammadiMozaffar:2017nri}
and other interesting works to our attention.

During the start of this project the research of SP was supported by
the ERC Advanced Grant ``High-Spin-Grav'' and by FNRS-Belgium
(Convention FRFC PDR T.1025.14 and Convention IISN 4.4503.15). SP was
supported by the Leverhulme Trust Research Project Grant
(RPG-2019-218) ``What is Non-Relativistic Quantum Gravity and is it
Holographic?''.

SP acknowledges support of the Erwin Schrödinger Institute (ESI) in
Vienna where part of this work was conducted during the thematic
programme ``Geometry for Higher Spin Gravity: Conformal Structures,
PDEs, and Q-manifolds''.

\appendix

\section{Coadjoint orbits of the two-dimensional nonabelian Lie group}
\label{sec:coadjoint-orbits-G2}

The connected (and simply-connected) Lie group $\GG$ whose Lie algebra
$\g = \left<D,H\right>$ with bracket $[D,H]=H$ is isomorphic to the
matrix group
\begin{equation}
  \GG = \left\{
    \begin{pmatrix}
      a & b \\ 0 & 1
    \end{pmatrix}
\middle | a,b \in \RR,~ a> 0\right\},
\end{equation}
with Lie algebra
\begin{equation}
  \g = \left\{
    \begin{pmatrix}
      x & y \\ 0 & 0
    \end{pmatrix}
    \middle | x,y \in \RR\right\}.
\end{equation}
We introduce the (non-invariant) inner product $\left<-,-\right>$ on
$\g$ by
\begin{equation}
  \left<X,Y\right> = \tr X^T Y~.
\end{equation}
This inner product defines a musical (vector space) isomorphism $\flat
: \g \to \g^*$.   If $X^\flat \in \g^*$ and $g \in \GG$ we define the
coadjoint action by
\begin{equation}
  (\Ad^*_g X^\flat)(Y) = X^\flat (\Ad_{g^{-1}} Y) = \tr X^T g^{-1} Y g
  = \tr g X^T g^{-1} Y.
\end{equation}
Explicitly, if
\begin{equation}
  X^T =
  \begin{pmatrix}
    \alpha & 0 \\ \beta & 0
  \end{pmatrix}
\end{equation}
then under the coadjoint action of $g =\begin{pmatrix}  a & b \\ 0 &
  1\end{pmatrix}$,
\begin{equation}
  \alpha \mapsto \alpha + b a^{-1} \beta \qquad\text{and}\qquad \beta
  \mapsto \beta a^{-1}.
\end{equation}
Therefore the orbit of $(\alpha,\beta) = (\alpha, 0)$ is a point,
whereas the orbit of $(\alpha, \beta > 0)$ is the half-plane $\beta >
0$ and that of $(\alpha, \beta < 0)$ is the half-plane $\beta < 0$.

\section{Coadjoint orbits of the Heisenberg group}
\label{sec:coadj-orbits-heis}

The Heisenberg group $\NN$ is the three-dimensional Lie group of
strictly upper triangular unipotent $3 \times 3$ matrices
\begin{equation}
  \NN = \left\{
    \begin{pmatrix}
      1 & a & c \\ 0 & 1 & b \\ 0 & 0 & 1
    \end{pmatrix}
\middle | a,b,c \in \RR\right\},
\end{equation}
with Lie algebra
\begin{equation}
  \n = \left\{
    \begin{pmatrix}
      0 & x & z \\ 0 & 0 & y \\ 0 & 0 & 0
    \end{pmatrix}
\middle | x,y,z \in \RR\right\}.
\end{equation}
It is straightforward to work out the adjoint action of $\NN$ on $\n$:
if $g =\begin{pmatrix} 1 & a & c \\ 0 & 1 & b \\ 0 & 0 &
  1 \end{pmatrix}$,
\begin{equation}
  \Ad_g
  \begin{pmatrix}
      0 & x & z \\ 0 & 0 & y \\ 0 & 0 & 0
    \end{pmatrix}
    =
    \begin{pmatrix}
      0 & x & z + a y - b x \\ 0 & 0 & y \\ 0 & 0 & 0
    \end{pmatrix}.
\end{equation}
We can identify $\n^*$ with $\n$ as vector spaces under the musical
isomorphism of the non-invariant inner product $\left<-,-\right>$ on
$\h$ given by $\left<X,Y\right> = \tr X^T Y$.  Then under the same
element $g$ as before, the coadjoint action is given by
\begin{equation}
  \Ad^*_g \begin{pmatrix}
      0 & \alpha & \gamma \\ 0 & 0 & \beta \\ 0 & 0 & 0
    \end{pmatrix}
    =
    \begin{pmatrix}
      0 & \alpha + b \gamma  & \gamma \\ 0 & 0 & \beta - a \gamma \\ 0 & 0 & 0
    \end{pmatrix}.
\end{equation}
Therefore there are two kinds of coadjoint orbits:
\begin{itemize}
\item the orbits of $(\alpha,\beta,\gamma) = (\alpha, \beta, 0)$, which
  are points; and
\item the orbits of $(\alpha,\beta, \gamma \neq 0)$, which are the
  affine (hyper)planes $\left\{(\alpha,\beta,\gamma)\middle | \alpha,\beta
    \in \RR\right\}$.
\end{itemize}

\section{Coadjoint orbits of $\SO(n)$}
\label{sec:coadjoint-orbits-son}

In this appendix we describe in some more detail the adjoint orbits of
$\SO(n)$.

Let $\Lambda \in \so(n)$. Since $\Lambda$ is skew-symmetric,
$i\Lambda$ is a hermitian endomorphism of $\CC^n$ and thus has real
eigenvalues.  This means $\Lambda$ has purely imaginary eigenvalues
and since $\Lambda$ is real, they are either zero or else come in
complex conjugate pairs $(i\lambda,-i\lambda)$. Let $\lambda \neq 0$
and $E_{i\lambda} \subset \CC^n$ be the eigenspace of $\Lambda$ with
eigenvalue $i\lambda$. If $v \in E_{i\lambda}$, its complex conjugate
$\bar v \in E_{-i\lambda}$. Write $v = x + i y$, where
$x,y \in \RR^n$. Then $\Lambda x = - \lambda y$ and
$\Lambda y = \lambda x$. If $i\lambda$ has multiplicity $m$, then we
can choose a unitary basis $v_1,\dots,v_m$ for $E_{i\lambda}$ and
writing $v_j = x_i + i y_j$, we have that on the span of
${x_1,\dots,x_m,y_1,\dots,y_m}$, the matrix representing $\Lambda$
takes the form of $m$ blocks of $2\times 2$ matrices of the form
\begin{equation}
  \lambda J := \begin{pmatrix}    0 & -\lambda \\ \lambda & 0 \end{pmatrix}.
\end{equation}
In this way, we find that there is a basis for $\RR^n$ relative to
which $\Lambda$ has matrix
\begin{equation}
  \begin{pmatrix}
    0 & & & \\
    & \lambda_1 J_{m_1} & & \\
    & & \ddots & \\
    & & & \lambda_k J_{m_k}
  \end{pmatrix}
\end{equation}
where $m_0 = \dim\ker \Lambda$ is the size of the first block, $m_j$
is the multiplicity of $i\lambda_j$ and $J_m$ is a $2m \times 2m$
matrix consisting of $m$ blocks of $2\times 2$ matrices which are
either
\begin{equation}
  J = \begin{pmatrix}
    0 & -1 \\ 1 & 0 
  \end{pmatrix} \qquad\text{or}\qquad
  - J = \begin{pmatrix}
    0 & 1 \\ -1 & 0
  \end{pmatrix}.
\end{equation}
It is clear that the transformation taking $\Lambda$ to the above
matrix is orthogonal, but perhaps not special orthogonal.  Indeed, the
matrices $\pm J$ are related by an orientation reversing
transformation of the plane, and hence since we are only allowed to
conjugate by $\SO(n)$, we cannot simply make all blocks be $J$,
say. This problem only arises for $n$ even, since if $n$ is odd, then
$\Lambda$ has kernel and we can indeed use the freedom to multiply one
of the basis elements in the kernel by $-1$ to ensure that the
orthogonal matrix which conjugates $\Lambda$ to a normal form where
all $2\times 2$ blocks are of the form $J$, has determinant $1$.

As an example, let us list the possible coadjoint orbits of $SO(4)$:
\begin{itemize}
\item If $\Lambda = 0$, we get a point-like orbit with stabiliser
  $\SO(4)$.
\item If $\dim\ker\Lambda = 2$, then we can bring $\Lambda$ to the
  form
  \begin{equation}
    \begin{pmatrix}
      0 & 0 & 0 & 0\\
      0 & 0 & 0 & 0\\
      0 & 0 & 0 & -\lambda\\
      0 & 0 & \lambda & 0
    \end{pmatrix}
  \end{equation}
  for some real $\lambda \neq 0$.  We can arrange for $\lambda > 0$
  without loss of generality, since, e.g.,
  \begin{equation}
    \begin{pmatrix}
      1 & 0 & 0 & 0 \\
      0 & -1 & 0 & 0 \\
      0 & 0 & 0 & 1\\
      0 & 0 & 1 & 0
    \end{pmatrix} \in \SO(4)
  \end{equation}
  and conjugating by that matrix changes $\lambda$ to $-\lambda$.  The
  stabiliser in this case is $\operatorname{S(O(2) \times U(1))}$.
\item If $\dim \ker \Lambda = 0$, we have two possibilities:
  \begin{itemize}
  \item  if the eigenvalue $\lambda$ has multiplicity $2$, then we
    have two kinds of orbits corresponding to
    \begin{equation}
      \begin{pmatrix}
        \lambda J & 0 \\
        0 & \lambda J
      \end{pmatrix}
      \qquad\text{or}\qquad
      \begin{pmatrix}
        \lambda J & 0 \\
        0 & -\lambda J
      \end{pmatrix}.
    \end{equation}
    Those two matrices are conjugate in $\operatorname{O(4)}$, but not
    in $\SO(4)$.  The stabiliser of these coadjoint orbits is $\U(2)
    \subset \SO(4)$.
  \item if the two eigenvalues $\lambda_1,\lambda_2$ are different,
    then we can assume without loss of generality that $\lambda _1
    \leq \lambda_2$, and we can bring $\Lambda$ to one of the
    following two matrices
    \begin{equation}
      \begin{pmatrix}
        \lambda_1 J & 0 \\ 0 & \pm \lambda_2 J
      \end{pmatrix}.
    \end{equation}
    The stabiliser in this case is $\operatorname{S(U(1)\times U(1))}$.
  \end{itemize}
\end{itemize}

Alternatively, we can argue as follows.  Given any $\Lambda \in
\so(n)$, it is conjugate under $\SO(n)$ to a matrix in a Cartan
subalgebra, say:
\begin{equation}
  \begin{pmatrix}
    0 & \lambda_1 & & & & \\
    -\lambda_1 & 0 & & & & \\
    & & 0 & \lambda_2 & & \\
    & & -\lambda_2 & 0 & & \\
    & & & & \ddots & \\
    & & & & & 0
  \end{pmatrix},
\end{equation}
where the $\lambda_i$ are not necessarily distinct. This still leaves
the action of the Weyl group. If $n= 2\ell + 1$ is odd, then the Weyl
group is $\{\pm 1\}^\ell \rtimes S_\ell$: the symmetric group $S_\ell$
acts by permuting the $\lambda_i$ and $\{\pm 1\}^\ell$ acts by
changing the signs of the $\lambda_i$. We can therefore arrange them
so that
$0 \leq \lambda_1 \leq \lambda_2 \leq \cdots \leq \lambda_\ell$. If
$n = 2 \ell$ is even, then the Weyl group is the index-2 subgroup of
$\{\pm 1\}^\ell \rtimes S_\ell$ consisting of an \emph{even} number of
sign changes, equivalently, it is the subgroup
$K_\ell \rtimes S_\ell$, where $K_\ell$ is the kernel of the group
homomorphism $\{\pm 1\}^\ell \to \{\pm 1\}$ sending
$(\sigma_1,\dots,\sigma_\ell) \mapsto \sigma_1\dots \sigma_\ell$.

In the case of $\SO(4)$ we can bring any $\Lambda \in \so(4)$ to
\begin{equation}
  \begin{pmatrix}
    0 & \lambda_1 & & \\
    -\lambda_1 & 0 & & \\
    & & 0 & \lambda_2 \\
    & & -\lambda_2 & 0
  \end{pmatrix},
\end{equation}
and the Weyl group is the Klein \emph{Vierergruppe} acting on
$(\lambda_1, \lambda_2)$ by permuting them
$(\lambda_1, \lambda_2) \mapsto (\lambda_2, \lambda_1)$ or changing
\emph{both} their signs
$(\lambda_1, \lambda_2) \mapsto (-\lambda_1, -\lambda_2)$. The moduli
space of coadjoint orbits is then the wedge
$-\lambda_2 \leq \lambda_1 \leq \lambda_2$, which is illustrated in
Figure~\ref{fig:coad-orbs-so4}. The generic stabiliser is
$\operatorname{S(U(1)\times U(1))}$. If $\lambda_1 = 0$, but
$\lambda_2 \neq 0$, then the stabiliser is
$\operatorname{S(O(2) \times U(1))}$. If
$\lambda_1 = \pm \lambda_2 \neq 0$, then the stabiliser is enhanced to
$\U(2)$, and if $\lambda_1 = \lambda_2 = 0$, it is all of $\SO(4)$.

\begin{figure}[h!]
  \centering
  \begin{tikzpicture}[>=latex, x=1cm,y=1cm]
    %
    %
    \fill[blue!20!white] (-3,3) -- (0,0) -- (3,3) -- cycle;
    %
    %
    \draw[black!70!white] (-4,0) -- (4,0) node [above] {$\lambda_1$};
    \draw[black!70!white] (0,-1) -- (0,4) node [right] {$\lambda_2$};
    %
    %
    \draw[blue!20!black, thick] (0,0) -- (3,3) node [pos=0.5, below, sloped] {\tiny $\lambda_1 = \lambda_2$};
    \draw[blue!20!black, thick] (-3,3) -- (0,0) node [pos=0.5, below, sloped] {\tiny $\lambda_1 = -\lambda_2$};
    %
    %
    \filldraw [color=red!70!black,fill=red!50!white] (0,0) circle (2pt);
  \end{tikzpicture}
  \caption{Moduli space of coadjoint orbits of $\SO(4)$}
  \label{fig:coad-orbs-so4}
\end{figure}
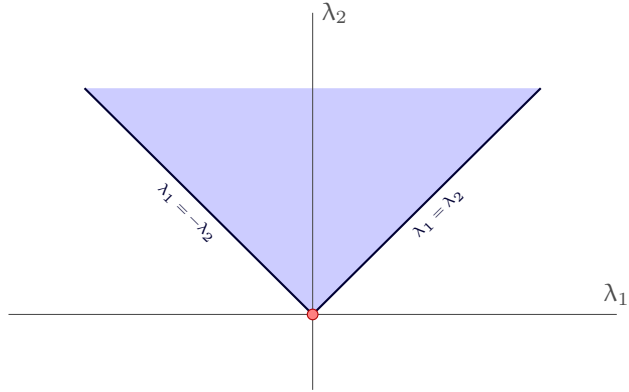

\section{Coadjoint orbits of the Lorentz group}
\label{sec:coadjoint-orbits-lorentz}

In this appendix we describe the coadjoint orbits of the (proper,
orthochronous) Lorentz group; that is, the identity component
$\SO(n,1)_0$ of the ($n+1$)-dimensional Lorentz group, for $n\geq
2$. Since for those values of $n$, $\so(n,1)$ is semisimple, the
musical isomorphisms associated to the Killing form, which are
$\so(n,1)$-equivariant, set up an equivalence between the adjoint and
coadjoint representations and hence between the adjoint and coadjoint
orbits under the identity component $\SO(n,1)_0$.

As explained, for example, in a different context in
\cite[Section~2]{Figueroa-OFarrill:2004lpm}, every $\Lambda \in
\so(p,q)$ can be brought to a normal form under the adjoint action of
$\SO(p,q)$.  In that normal form, $\Lambda$ is block diagonal where
the relevant elementary blocks for $(p,q) = (n,1)$ are given in
Table~\ref{tab:blocks}, together with their signature $(p,q)$.  To
obtain the normal forms for $\Lambda \in \so(n,1)$, we need to build
all block diagonal matrices of signature $(n,1)$ using the elementary
blocks and then check whether there is a further identification
of the parameters under the adjoint action.

\begin{table}[h!]
  \centering
  \caption{Elementary blocks for lorentzian signature}
  \label{tab:blocks}
  \setlength{\extrarowheight}{3pt}
  \begin{tabular}{>{$}l<{$}|>{$}l<{$}}
    \multicolumn{1}{c|}{Signature $(p,q)$} & \multicolumn{1}{c}{Block in $\so(p,q)$}\\
    \toprule
    (1,0) & [0]\\
    (0,1) & [0]\\
    (2,0) & \begin{bmatrix} 0 & \varphi \\ -\varphi & 0\end{bmatrix}\\
    (1,1) & \begin{bmatrix} 0 & \beta \\ \beta & 0 \end{bmatrix}\\
    (2,1) & \begin{bmatrix} 0 & 1 & 0 \\ 1 & 0 & \pm 1 \\ 0 & \mp 1 &  0\end{bmatrix}\\
    \bottomrule
  \end{tabular}
  \caption*{The parameters $\varphi,\beta$ are nonzero real numbers,
    but depending on $n$, we may restrict their sign.}
\end{table}

For example, if $n=2$, then we can build a block-diagonal
matrix of signature $(2,1)$ from elementary blocks as follows:
\begin{itemize}
\item $(0,1) \oplus (1,0) \oplus (1,0)$
\item $(0,1) \oplus (2,0)$
\item $(1,1) \oplus (1,0)$
\item $(2,1)$
\end{itemize}
This results in the following normal forms for matrices in $\so(2,1)$:
\begin{equation}
  \begin{pmatrix}
    0 & 0 & 0 \\ 0 & 0 & 0 \\ 0 & 0 & 0
  \end{pmatrix}\qquad
  \begin{pmatrix}
    0 & 1 & 0 \\ 1 & 0 & \pm 1 \\ 0 & \pm 1 & 0
  \end{pmatrix}\qquad
  \begin{pmatrix}
    0 & 0 & 0 \\ 0 & 0 & \varphi \\ 0 & -\varphi & 0
  \end{pmatrix}\qquad
  \begin{pmatrix}
    0 & \beta & 0 \\ \beta & 0 & 0\\ 0 & 0 & 0
  \end{pmatrix}.
\end{equation}
The two irreducible blocks cannot be related under the identity
component $SO(n,1)_0$, and neither can the sign of $\varphi$ be
changed.  However, there is a proper orthochronous Lorentz
transformation which changes the sign of $\beta$ in the last block, so
that we can actually take $\beta > 0$.  Because the Killing form for
$\so(2,1)$ has signature $(2,1)$, the coadjoint orbits of $\SO(2,1)_0$
coincide with the orbits in a three-dimensional lorentzian vector
$(V,\eta)$ space under the proper orthochronous Lorentz
transformations: namely,
\begin{itemize}
\item the origin, corresponding to the zero matrix;
\item the future and past deleted lightcones $\eta(v,v)=0$, corresponding to the
  matrices $\begin{pmatrix} 0 & 1 & 0 \\ 1 & 0 & \pm 1 \\ 0 & \pm 1 &
    0 \end{pmatrix}$;
\item the upper and lower sheets of the two-sheeted hyperboloids
  $\eta(v,v)=-\varphi^2$, corresponding to the matrices $\begin{pmatrix}
    0 & 0 & 0 \\ 0 & 0 & \varphi \\ 0 & -\varphi & 0
  \end{pmatrix}$ for $\pm \varphi > 0$; and
\item the one-sheeted hyperboloid $\eta(v,v)=\beta^2$, corresponding
  to the matrices $\begin{pmatrix} 0 & \beta & 0 \\ \beta & 0 & 0\\ 0
    & 0 & 0 \end{pmatrix}$, for $\beta > 0$.
\end{itemize}
This of course is completely elementary and is the lorentzian analogue
of the decomposition of three-dimensional euclidean space under the
group of rotations into the origin and the spheres of radius $r$ for
$r>0$.

To illustrate further, let us now consider $n=3$ and discuss the
coadjoint orbits of $\SO(3,1)_0$.  Using the elementary blocks in
Table~\ref{tab:blocks}, we can build a block diagonal matrix of
signature $(3,1)$ as follows:
\begin{itemize}
\item $(0,1) \oplus (1,0) \oplus (1,0) \oplus (1,0)$
\item $(2,1) \oplus (1,0)$
\item $(1,1) \oplus (2,0)$
\item $(1,1) \oplus (1,0) \oplus (1,0)$
\item $(0,1) \oplus (1,0) \oplus (2,0)$
\end{itemize}
This results in the following normal forms for matrices in $\so(3,1)$:
\begin{equation}
  \begin{pmatrix}
    0 & 1 & 0 & 0 \\
    1 & 0 & \pm 1 & 0\\
    0 & \mp 1 & 0 & 0\\
    0 & 0 & 0 & 0
  \end{pmatrix}\qquad
  \begin{pmatrix}
    0 & \beta & 0 & 0\\
    \beta & 0 & 0 & 0\\
    0 & 0 & 0 & \varphi\\
    0 & 0 & -\varphi & 0
  \end{pmatrix}\qquad
  \begin{pmatrix}
    0 & \beta & 0 & 0\\
    \beta & 0 & 0 & 0\\
    0 & 0 & 0 & 0\\
    0 & 0 & 0 & 0
  \end{pmatrix}\qquad
  \begin{pmatrix}
    0 & 0 & 0 & 0\\
    0 & 0 & 0 & 0\\
    0 & 0 & 0 & \varphi\\
    0 & 0 & -\varphi & 0
  \end{pmatrix}
\end{equation}
in addition to the zero matrix.  Now conjugation with $\SO(3,1)_0$
relates the versions of the first matrix with the different signs, so
we need only consider one of them.  Similarly, in the last two normal
forms, we need only keep $\beta >0$ and $\varphi>0$.  In the second
normal form, conjugation changes the signs of $\beta$ and $\varphi$
simultaneously, hence we can take $\beta > 0$ and $\varphi$ nonzero
but otherwise unconstrained.

\providecommand{\href}[2]{#2}\begingroup\raggedright\endgroup


\begin{thebibliography}{10}

\bibitem{Hartnoll:2016apf}
S.~A. Hartnoll, A.~Lucas, and S.~Sachdev, ``{Holographic quantum matter},''
  \href{http://arxiv.org/abs/1612.07324}{{\tt arXiv:1612.07324 [hep-th]}}.

\bibitem{Iqbal:2011ae}
N.~Iqbal, H.~Liu, and M.~Mezei, ``{Lectures on holographic non-Fermi liquids
  and quantum phase transitions},'' in {\em {Theoretical Advanced Study
  Institute in Elementary Particle Physics}: {String theory and its
  Applications: From meV to the Planck Scale}}.
\newblock 10, 2011.
\newblock \href{http://arxiv.org/abs/1110.3814}{{\tt arXiv:1110.3814
  [hep-th]}}.

\bibitem{Arav:2014goa}
I.~Arav, S.~Chapman, and Y.~Oz, ``{Lifshitz Scale Anomalies},'' {\em JHEP} {\bf
  02} (2015) 078, \href{http://arxiv.org/abs/1410.5831}{{\tt arXiv:1410.5831
  [hep-th]}}.

\bibitem{Taylor:2015glc}
M.~Taylor, ``{Lifshitz holography},'' {\em Class. Quant. Grav.} {\bf 33}
  (2016), no.~3, 033001, \href{http://arxiv.org/abs/1512.03554}{{\tt
  arXiv:1512.03554 [hep-th]}}.

\bibitem{Mohammadi:2020jbe}
M.~Mohammadi and A.~Sheykhi, ``{Lifshitz scaling effects on the holographic
  p-wave superconductors coupled to nonlinear electrodynamics},'' {\em Eur.
  Phys. J. C} {\bf 80} (2020), no.~10, 928,
  \href{http://arxiv.org/abs/2010.07105}{{\tt arXiv:2010.07105 [hep-th]}}.

\bibitem{Arav:2019tqm}
I.~Arav, Y.~Oz, and A.~Raviv-Moshe, ``{Holomorphic Structure and Quantum
  Critical Points in Supersymmetric Lifshitz Field Theories},'' {\em JHEP} {\bf
  11} (2019) 064, \href{http://arxiv.org/abs/1908.03220}{{\tt arXiv:1908.03220
  [hep-th]}}.

\bibitem{Charmousis:2010zz}
C.~Charmousis, B.~Gouteraux, B.~S. Kim, E.~Kiritsis, and R.~Meyer, ``{Effective
  Holographic Theories for low-temperature condensed matter systems},'' {\em
  JHEP} {\bf 11} (2010) 151, \href{http://arxiv.org/abs/1005.4690}{{\tt
  arXiv:1005.4690 [hep-th]}}.

\bibitem{Brynjolfsson:2010rx}
E.~J. Brynjolfsson, U.~H. Danielsson, L.~Thorlacius, and T.~Zingg, ``{Black
  Hole Thermodynamics and Heavy Fermion Metals},'' {\em JHEP} {\bf 08} (2010)
  027, \href{http://arxiv.org/abs/1003.5361}{{\tt arXiv:1003.5361 [hep-th]}}.

\bibitem{PhysRevLett.124.076801}
Y.~Cao, D.~Chowdhury, D.~Rodan-Legrain, O.~Rubies-Bigorda, K.~Watanabe,
  T.~Taniguchi, T.~Senthil, and P.~Jarillo-Herrero, ``Strange metal in
  magic-angle graphene with near planckian dissipation,'' {\em Phys. Rev.
  Lett.} {\bf 124} (Feb, 2020) 076801.

\bibitem{2018Natur.556...80C}
Y.~{Cao}, V.~{Fatemi}, A.~{Demir}, S.~{Fang}, S.~L. {Tomarken}, J.~Y. {Luo},
  J.~D. {Sanchez-Yamagishi}, K.~{Watanabe}, T.~{Taniguchi}, E.~{Kaxiras}, R.~C.
  {Ashoori}, and P.~{Jarillo-Herrero}, ``{Correlated insulator behaviour at
  half-filling in magic-angle graphene superlattices},'' {\em Nature} {\bf 556}
  (Apr., 2018) 80--84, \href{http://arxiv.org/abs/1802.00553}{{\tt
  arXiv:1802.00553 [cond-mat.mes-hall]}}.

\bibitem{cao2018unconventional}
Y.~{Cao}, V.~{Fatemi}, S.~{Fang}, K.~{Watanabe}, T.~{Taniguchi}, E.~{Kaxiras},
  and P.~{Jarillo-Herrero}, ``Unconventional superconductivity in magic-angle
  graphene superlattices,'' {\em Nature} {\bf 556} (2018), no.~7699, 43--50.

\bibitem{Sachdev:2010ch}
S.~Sachdev, ``{Condensed Matter and AdS/CFT},'' {\em Lect. Notes Phys.} {\bf
  828} (2011) 273--311, \href{http://arxiv.org/abs/1002.2947}{{\tt
  arXiv:1002.2947 [hep-th]}}.

\bibitem{Inkof:2019gmh}
G.~A. Inkof, J.~M.~C. K\"uppers, J.~M. Link, B.~Gout\'eraux, and J.~Schmalian,
  ``{Quantum critical scaling and holographic bound for transport coefficients
  near Lifshitz points},'' {\em JHEP} {\bf 11} (2020) 088,
  \href{http://arxiv.org/abs/1907.05744}{{\tt arXiv:1907.05744
  [cond-mat.str-el]}}.

\bibitem{MohammadiMozaffar:2012dsy}
M.~R. Mohammadi~Mozaffar and A.~Mollabashi, ``{Holographic quantum critical
  points in Lifshitz space-time},'' {\em JHEP} {\bf 04} (2013) 081,
  \href{http://arxiv.org/abs/1212.6635}{{\tt arXiv:1212.6635 [hep-th]}}.

\bibitem{Erdmenger:2018svl}
J.~Erdmenger, I.~Matthaiakakis, R.~Meyer, and D.~Rodr\'\i{}guez~Fern\'andez,
  ``{Strongly coupled electron fluids in the Poiseuille regime},'' {\em Phys.
  Rev. B} {\bf 98} (2018), no.~19, 195143,
  \href{http://arxiv.org/abs/1806.10635}{{\tt arXiv:1806.10635
  [cond-mat.mes-hall]}}.

\bibitem{Jeong:2017rxg}
H.-S. Jeong, Y.~Ahn, D.~Ahn, C.~Niu, W.-J. Li, and K.-Y. Kim, ``{Thermal
  diffusivity and butterfly velocity in anisotropic Q-Lattice models},'' {\em
  JHEP} {\bf 01} (2018) 140, \href{http://arxiv.org/abs/1708.08822}{{\tt
  arXiv:1708.08822 [hep-th]}}.

\bibitem{PhysRevLett.85.626}
O.~Trovarelli, C.~Geibel, S.~Mederle, C.~Langhammer, F.~M. Grosche,
  P.~Gegenwart, M.~Lang, G.~Sparn, and F.~Steglich,
  ``${\mathrm{ybrh}}_{2}{\mathrm{si}}_{2}$: Pronounced non-fermi-liquid effects
  above a low-lying magnetic phase transition,'' {\em Phys. Rev. Lett.} {\bf
  85} (Jul, 2000) 626--629.

\bibitem{Bruin804}
J.~A.~N. Bruin, H.~Sakai, R.~S. Perry, and A.~P. Mackenzie, ``Similarity of
  scattering rates in metals showing t-linear resistivity,'' {\em Science} {\bf
  339} (2013), no.~6121, 804--807,
  \href{http://arxiv.org/abs/https://science.sciencemag.org/content/339/6121/804.full.pdf}{{\tt
  https://science.sciencemag.org/content/339/6121/804.full.pdf}}.

\bibitem{Baggio:2011ha}
M.~Baggio, J.~de~Boer, and K.~Holsheimer, ``{Anomalous Breaking of Anisotropic
  Scaling Symmetry in the Quantum Lifshitz Model},'' {\em JHEP} {\bf 07} (2012)
  099, \href{http://arxiv.org/abs/1112.6416}{{\tt arXiv:1112.6416 [hep-th]}}.

\bibitem{Hartnoll:2009sz}
S.~A. Hartnoll, ``{Lectures on holographic methods for condensed matter
  physics},'' {\em Class. Quant. Grav.} {\bf 26} (2009) 224002,
  \href{http://arxiv.org/abs/0903.3246}{{\tt arXiv:0903.3246 [hep-th]}}.

\bibitem{Sachdev:2011wg}
S.~Sachdev, ``{What can gauge-gravity duality teach us about condensed matter
  physics?},'' {\em Ann. Rev. Condensed Matter Phys.} {\bf 3} (2012) 9--33,
  \href{http://arxiv.org/abs/1108.1197}{{\tt arXiv:1108.1197
  [cond-mat.str-el]}}.

\bibitem{zaanen_liu_sun_schalm_2015}
J.~Zaanen, Y.~Liu, Y.-W. Sun, and K.~Schalm, {\em Holographic Duality in
  Condensed Matter Physics}.
\newblock Cambridge University Press, 2015.

\bibitem{Gong:2020pse}
H.~Gong, P.~Liu, G.~Fu, X.-M. Kuang, and J.-P. Wu, ``{Informational properties
  of holographic Lifshitz field theory},'' {\em Chin. Phys. C} {\bf 45} (2021),
  no.~6, 065101, \href{http://arxiv.org/abs/2009.00450}{{\tt arXiv:2009.00450
  [hep-th]}}.

\bibitem{MohammadiMozaffar:2017nri}
M.~R. Mohammadi~Mozaffar and A.~Mollabashi, ``{Entanglement in Lifshitz-type
  Quantum Field Theories},'' {\em JHEP} {\bf 07} (2017) 120,
  \href{http://arxiv.org/abs/1705.00483}{{\tt arXiv:1705.00483 [hep-th]}}.

\bibitem{He:2017wla}
T.~He, J.~M. Magan, and S.~Vandoren, ``{Entanglement Entropy in Lifshitz
  Theories},'' {\em SciPost Phys.} {\bf 3} (2017), no.~5, 034,
  \href{http://arxiv.org/abs/1705.01147}{{\tt arXiv:1705.01147 [hep-th]}}.

\bibitem{Horowitz:2010gk}
G.~T. Horowitz, ``{Introduction to Holographic Superconductors},'' {\em Lect.
  Notes Phys.} {\bf 828} (2011) 313--347,
  \href{http://arxiv.org/abs/1002.1722}{{\tt arXiv:1002.1722 [hep-th]}}.

\bibitem{ammon_erdmenger_2015}
M.~Ammon and J.~Erdmenger, {\em Gauge/Gravity Duality: Foundations and
  Applications}.
\newblock Cambridge University Press, 2015.

\bibitem{Kachru:2008yh}
S.~Kachru, X.~Liu, and M.~Mulligan, ``{Gravity Duals of Lifshitz-like Fixed
  Points},'' {\em Phys.Rev.} {\bf D78} (2008) 106005,
  \href{http://arxiv.org/abs/0808.1725}{{\tt arXiv:0808.1725 [hep-th]}}.

\bibitem{Hartong:2014pma}
J.~Hartong, E.~Kiritsis, and N.~A. Obers, ``{Schr\"odinger Invariance from
  Lifshitz Isometries in Holography and Field Theory},'' {\em Phys. Rev. D}
  {\bf 92} (2015) 066003, \href{http://arxiv.org/abs/1409.1522}{{\tt
  arXiv:1409.1522 [hep-th]}}.

\bibitem{Figueroa-OFarrill:2018ilb}
J.~Figueroa-O'Farrill and S.~Prohazka, ``{Spatially isotropic homogeneous
  spacetimes},'' {\em JHEP} {\bf 01} (2019) 229,
  \href{http://arxiv.org/abs/1809.01224}{{\tt arXiv:1809.01224 [hep-th]}}.

\bibitem{Figueroa-OFarrill:2018ygf}
J.~M. Figueroa-O'Farrill, ``{Conformal Lie algebras via deformation theory},''
  {\em J. Math. Phys.} {\bf 60} (2019), no.~2, 021702,
  \href{http://arxiv.org/abs/1809.03603}{{\tt arXiv:1809.03603 [hep-th]}}.

\bibitem{Bianchi}
L.~Bianchi, ``Sugli spazi a tre dimensioni che ammettono un gruppo continuo di
  movimenti,'' {\em Memorie di Matematica e di Fisica della Societa Italiana
  delle Scienze, Serie Terza,} {\bf Tomo XI} (1898) 267--352.

\bibitem{MR1461545}
J.-M. Souriau, {\em Structure of dynamical systems}, vol.~149 of {\em Progress
  in Mathematics}.
\newblock Birkh\"{a}user Boston, Inc., Boston, MA, 1997.
\newblock A symplectic view of physics, Translated from the French by C. H.
  Cushman-de Vries, Translation edited and with a preface by R. H. Cushman and
  G. M. Tuynman.

\bibitem{Jokela:2016nsv}
N.~Jokela, J.~J\"arvel\"a, and A.~V. Ramallo, ``{Non-relativistic anyons from
  holography},'' {\em Nucl. Phys. B} {\bf 916} (2017) 727--768,
  \href{http://arxiv.org/abs/1605.09156}{{\tt arXiv:1605.09156 [hep-th]}}.

\bibitem{Figueroa-OFarrill:2019sex}
J.~Figueroa-O'Farrill, R.~Grassie, and S.~Prohazka, ``{Geometry and BMS Lie
  algebras of spatially isotropic homogeneous spacetimes},'' {\em JHEP} {\bf
  08} (2019) 119, \href{http://arxiv.org/abs/1905.00034}{{\tt arXiv:1905.00034
  [hep-th]}}.

\bibitem{MR1488158}
H.~Weyl, {\em The classical groups}.
\newblock Princeton Landmarks in Mathematics. Princeton University Press,
  Princeton, NJ, 1997.
\newblock Their invariants and representations, Fifteenth printing, Princeton
  Paperbacks.

\bibitem{BFOSymplectic}
A.~Beckett and J.~Figueroa-O'Farrill, ``Symplectic actions and central
  extensions,'' \href{http://arxiv.org/abs/2203.nnnn}{{\tt arXiv:2203.nnnn
  [math.DG]}}.

\bibitem{MR0059285}
W.~T. van Est, ``Group cohomology and {L}ie algebra cohomology in {L}ie groups.
  {I}, {II},'' {\em Nederl. Akad. Wetensch. Proc. Ser. A. {\bf 56} =
  Indagationes Math.} {\bf 15} (1953) 484--492, 493--504.

\bibitem{MR0070959}
W.~T. van Est, ``On the algebraic cohomology concepts in {L}ie groups. {I},
  {II},'' {\em Nederl. Akad. Wetensch. Proc. Ser. A. {\bf 58} = Indag. Math.}
  {\bf 17} (1955) 225--233, 286--294.

\bibitem{MR4081118}
E.~Meinrenken and M.~A. Salazar, ``Van {E}st differentiation and integration,''
  {\em Math. Ann.} {\bf 376} (2020), no.~3-4, 1395--1428.

\bibitem{QuimLifshitzParticle}
J.~Gomis, ``{Lifshitz and Schrödinger Algebras and Dynamical Realizations}.''
  \href{https://drive.google.com/file/d/1ODGmEBCQnHQMTSkrlB3jJYgggA1KZjaz/view?usp=sharing}{2011
  Melbourne Seminar}.

\bibitem{Oblak:2016eij}
B.~Oblak, {\em {BMS Particles in Three Dimensions}}.
\newblock PhD thesis, Brussels U., 2016.
\newblock \href{http://arxiv.org/abs/1610.08526}{{\tt arXiv:1610.08526
  [hep-th]}}.

\bibitem{arathoon2018hermitian}
P.~Arathoon and J.~Montaldi, ``{Hermitian flag manifolds and orbits of the
  Euclidean group},'' \href{http://arxiv.org/abs/1804.09463}{{\tt
  arXiv:1804.09463 [math.RT]}}.

\bibitem{Galajinsky:2022bwq}
A.~Galajinsky, ``{Dynamical realizations of the Lifshitz group},'' {\em Phys.
  Rev. D} {\bf 105} (2022), no.~10, 106023,
  \href{http://arxiv.org/abs/2201.10187}{{\tt arXiv:2201.10187 [hep-th]}}.

\bibitem{Nandkishore:2018sel}
R.~M. Nandkishore and M.~Hermele, ``{Fractons},'' {\em Ann. Rev. Condensed
  Matter Phys.} {\bf 10} (2019) 295--313,
  \href{http://arxiv.org/abs/1803.11196}{{\tt arXiv:1803.11196
  [cond-mat.str-el]}}.

\bibitem{Pretko:2020cko}
M.~Pretko, X.~Chen, and Y.~You, ``{Fracton Phases of Matter},'' {\em Int. J.
  Mod. Phys. A} {\bf 35} (2020), no.~06, 2030003,
  \href{http://arxiv.org/abs/2001.01722}{{\tt arXiv:2001.01722
  [cond-mat.str-el]}}.

\bibitem{Grosvenor:2021hkn}
K.~T. Grosvenor, C.~Hoyos, F.~Pe\~na Benitez, and P.~Sur\'owka,
  ``{Space-Dependent Symmetries and Fractons},'' {\em Front. in Phys.} {\bf 9}
  (2022) 792621, \href{http://arxiv.org/abs/2112.00531}{{\tt arXiv:2112.00531
  [hep-th]}}.

\bibitem{Nicolis:2008in}
A.~Nicolis, R.~Rattazzi, and E.~Trincherini, ``{The Galileon as a local
  modification of gravity},'' {\em Phys. Rev. D} {\bf 79} (2009) 064036,
  \href{http://arxiv.org/abs/0811.2197}{{\tt arXiv:0811.2197 [hep-th]}}.

\bibitem{Griffin:2013dfa}
T.~Griffin, K.~T. Grosvenor, P.~Horava, and Z.~Yan, ``{Multicritical Symmetry
  Breaking and Naturalness of Slow Nambu-Goldstone Bosons},'' {\em Phys. Rev.
  D} {\bf 88} (2013) 101701, \href{http://arxiv.org/abs/1308.5967}{{\tt
  arXiv:1308.5967 [hep-th]}}.

\bibitem{Hinterbichler:2014cwa}
K.~Hinterbichler and A.~Joyce, ``{Goldstones with Extended Shift Symmetries},''
  {\em Int. J. Mod. Phys. D} {\bf 23} (2014), no.~13, 1443001,
  \href{http://arxiv.org/abs/1404.4047}{{\tt arXiv:1404.4047 [hep-th]}}.

\bibitem{Griffin:2014bta}
T.~Griffin, K.~T. Grosvenor, P.~Horava, and Z.~Yan, ``{Scalar Field Theories
  with Polynomial Shift Symmetries},'' {\em Commun. Math. Phys.} {\bf 340}
  (2015), no.~3, 985--1048, \href{http://arxiv.org/abs/1412.1046}{{\tt
  arXiv:1412.1046 [hep-th]}}.

\bibitem{Bidussi:2021nmp}
L.~Bidussi, J.~Hartong, E.~Have, J.~Musaeus, and S.~Prohazka, ``{Fractons,
  dipole symmetries and curved spacetime},''
  \href{http://arxiv.org/abs/2111.03668}{{\tt arXiv:2111.03668 [hep-th]}}.

\bibitem{Jain:2021ibh}
A.~Jain and K.~Jensen, ``{Fractons in curved space},'' {\em SciPost Phys.} {\bf
  12} (2022) 142, \href{http://arxiv.org/abs/2111.03973}{{\tt arXiv:2111.03973
  [hep-th]}}.

\bibitem{Gromov:2018nbv}
A.~Gromov, ``{Towards classification of Fracton phases: the multipole
  algebra},'' {\em Phys. Rev. X} {\bf 9} (2019), no.~3, 031035,
  \href{http://arxiv.org/abs/1812.05104}{{\tt arXiv:1812.05104
  [cond-mat.str-el]}}.

\bibitem{Hirono:2022dci}
Y.~Hirono, M.~You, S.~Angus, and G.~Y. Cho, ``{A symmetry principle for gauge
  theories with fractons},'' \href{http://arxiv.org/abs/2207.00854}{{\tt
  arXiv:2207.00854 [cond-mat.str-el]}}.

\bibitem{Figueroa-OFarrill:20222xxx}
J.~Figueroa-O'Farrill, R.~Grassie, and S.~Prohazka.
\newblock {In preparation}.

\bibitem{Figueroa-OFarrill:2004lpm}
J.~M. Figueroa-O'Farrill and J.~Simón, ``{Supersymmetric Kaluza-Klein
  reductions of AdS backgrounds},'' {\em Adv. Theor. Math. Phys.} {\bf 8}
  (2004), no.~2, 217--317, \href{http://arxiv.org/abs/hep-th/0401206}{{\tt
  arXiv:hep-th/0401206}}.

\end{thebibliography}

\end{document}